\newcommand{\astate}{{\mathbf{s}}}
\newcommand{\anode}{{\mathbf{v}}}
\newcommand{\apath}{{\mathbf{p}}}
\newcommand{\name}{GSA}
\newtheorem{theorem}{Theorem}
\newtheorem{lemma}{Lemma}
\newtheorem{problem}{Problem}
\newtheorem{example}{Example}
\begin{document}

\title{An Efficient Gradient-Sensitive Alternate Framework for VQE with Variable Ansatz}

\author[1,3,4]{Ze-Tong Li}
\author[2,3,4]{Fan-Xu Meng}
\thanks{Equal contribution to Ze-Tong Li}
\author[2,3,4]{Han Zeng}
\author[2,3,4,5]{Zai-Chen Zhang}
\author[1,3,4,5]{Xu-Tao Yu}
\email{yuxutao@seu.edu.cn}

\affil[1]{State Key Laboratory of Millimeter Waves, Southeast University, Nanjing 210096, China}
\affil[2]{National Mobile Communications Research Laboratory, Southeast University, Nanjing 210096, China.}
\affil[3]{Frontiers Science Center for Mobile Information Communication and Security, Southeast University, Nanjing 210096, China.}
\affil[4]{Quantum Information Center, Southeast University, Nanjing 210096, China.}
\affil[5]{Purple Mountain Lab, Nanjing 211111, China.}

\maketitle

\begin{abstract}
  Variational quantum eigensolver (VQE), aiming at determining the ground state energy of a quantum system described by a Hamiltonian on noisy intermediate scale quantum (NISQ) devices, is among the most significant applications of variational quantum algorithms (VQAs). However, the accuracy and trainability of the current VQE algorithm are significantly influenced due to the \emph{barren plateau} (BP), the non-negligible gate error and limited coherence time in NISQ devices. To tackle these issues, a gradient-sensitive alternate framework with variable ansatz is proposed in this paper to enhance the performance of the VQE. We first propose a theoretical framework for VA-VQE via alternately solving a multi-objective optimization problem and the original VQE, where the multi-objective optimization problem is defined with respect to cost function values and gradient magnitudes. Then, we propose a novel implementation method based on the double $\epsilon$-greedy strategy with the candidate tree and modified multi-objective genetic algorithm. As a result, the local optima are avoided both in ansatz and parameter perspectives, and the stability of output ansatz is enhanced. The experimental results indicate that our framework shows considerably improvement of the error of the found solution by up to 87.9\% compared with the hardware-efficient ansatz. Furthermore, compared with the full-randomized VA-VQE implementation, our framework is able to obtain the improvement of the error and the stability by up to 36.0\% and 58.7\%, respectively, with similar quantum costs.
\end{abstract}

\section{Introduction}
Quantum computing, a promising paradigm for solving many classically intractable problems \cite{grover1997Quantum,shor1999Polynomialtime,harrow2009Quantum,low2019Hamiltonian}, is confronted with the era of noisy intermediate scale quantum
(NISQ) \cite{preskill2018Quantum}. NISQ devices, characterized as tiny quantities and short coherent time of qubits and low fidelity of quantum gates, seem difficult to efficiently apply the extraordinary quantum algorithms. 

A hybrid quantum-classical scheme kindles the hope of extracting the quantum advantages from NISQ devices. In this scheme, the quantum computer maintains a parameterized quantum circuit (aka. ansatz), and the classical computer runs an optimizer to find the parameters that minimize the cost function of the ansatz. This novel scheme was first proposed as the variational quantum eigensolver (VQE) \cite{peruzzo2014Variational} to find the ground-state energy, and generalized for solving linear systems \cite{bravo-prieto2020Variational}, simulating dynamics \cite{yuan2019Theory}, decomposing matrix \cite{wang2021Variational,meng2021Quantum}, reducing the dimensionality of data \cite{li2022Quantum}, and solving problems in the data science domain \cite{beer2020Training,abbas2021Power,biamonte2017Quantum}. In the following text, the term VQE is used to indicate all methods that adopt the aforementioned scheme.

The VQE has experimentally shown effectiveness in small-scale problems \cite{havlicek2019Supervised}. The elaborately engineered ansatzes \cite{kandala2017Hardwareefficient} with fixed layer structure and adjustable cardinality of layers have impressively exhibited high expressibility, the ability to express the extensive range of quantum states. However, the performance of VQE with the ansatzes degrades significantly with the qubit number and circuit depth \cite{du2020Quantum}. The phenomena of the degradation are generally exhibited as the non-ignorable absolute error between the converged cost function value and the exact minimum. 

One reason for the degradation is that the VQE suffers from the so-called barren plateau \cite{mcclean2018Barren} that the gradient magnitudes vanish exponentially with the system scale. Then, the severity of the BP phenomenon is linked to the number of entanglements \cite{patti2021Entanglement, ortizmarrero2021EntanglementInduced,cerezo2021Cost,sack2022Avoiding} and the expressibility \cite{holmes2022Connecting}. Another reason is quantum hardware noise both in quantum gates and qubits. The influence of the noise accumulates with the number of quantum gates and the circuit depth \cite{wang2021Noiseinduced} and causes the accuracy reduction of the observable estimation. This significantly impedes the convergence of the cost function since the value of the cost function and its gradient are inaccurate, which is known as noise-induced barren plateaus \cite{wang2021Noiseinduced}.

Recently, the variable ansatz strategy \cite{du2020Quantum,rattew2020Domainagnostic,zhang2021Differentiable,ostaszewski2021Reinforcement,bilkis2022Semiagnostic,zhang2021Mutual,meng2021Quantuma} has emerged as a promising technique to reserve the quantum advantage in large-scale situations. Unlike the structure fixed ansatz, the layer constraint is relaxed in variable ansatz, and gates can be added anywhere in the ansatz as required. As a result, both the size and the depth of the constructed anzatz can be effectively reduced, and thus better solutions can be found. Nevertheless, implementing the VQE with variable ansatz (VA-VQE) exists a significant space overhead for the ansatz layout optimization procedure, and consequently requires substantially more computing resources than that with the structure fixed ansatz.

To further improve efficiency, several techniques from intelligent algorithms have also been introduced. Several methods based on the complete training are proposed \cite{rattew2020Domainagnostic,ostaszewski2021Reinforcement,bilkis2022Semiagnostic,chivilikhin2020MoGVQE}. In these methods, the performance estimation of an ansatz requires VQE training until the cost function converges to obtain the lowest cost function value of the ansatz. As a result, high precision of the optimal selection among sampled ansatzes is achieved but undesirably introduces tremendous cost function determinations which are time-consuming because each determination of cost function requires substantial quantum measurements. Therefore, the range of sampled ansatzes is narrow when the quantity of quantum measurements is limited. This may result in local optimum from the ansatz perspective.

Attempting to overcome this hurdle, several methods based on the highly shared parameter pool are proposed \cite{du2020Quantum,meng2021Quantuma}. These methods construct ansatzes layer-wisely and maintain parameter pools from which values of parameters are exploited to calculate the cost function value as the performance predictor. Introducing the weight-sharing policy \cite{elsken2019Neural}, parameters are shared among layers with similar structures. Thus, the number of parameters in the parameter pool required to be trained is effectively reduced. These methods first train the parameter pools iteratively. In each iteration, several ansatzes are sampled based on the sampling strategy. Then, the parameters corresponding to the ansatzes are updated in specified steps (especially one step) by an optimizer. After training the parameter pools, the optimal ansatz selection procedure is processed. Quantities of ansatzes are sampled and evaluated by the performance predictor. Finally, the ansatz with the best performance, generally the lowest cost function value, is selected as the optimal ansatz. 

This kind of method gains the efficiency in finding optimal ansatz by avoiding complete training of ansatzes. However, since the parameters are highly shared and the training step of each ansatz is insufficient to converge, the current cost function values may not express the actual performances of the ansatzes, and hence the output suffers from high variance. This may result from two issues. The first is the training competitions that values of parameters in `good' ansatzes with promising layouts may be disturbed by the training of `bad' ansatzes incurred by the weight-sharing policy. The second issue is that the policy for ansatz layout optimization exploits the cost function value only as the benchmark to find the optimal ansatz structure. The mediocre ansatz whose cost function value is currently the lowest and cannot be further decreased may be selected as the optimum. Contrarily, the promising ansatzes may be discarded because of their temporarily high cost function values.

In this paper, addressing these issues, we propose a gradient-sensitive alternate framework (\name) for VA-VQE. The \name~alternately optimizes the structure of the ansatz and corresponding parameters by emphasizing the gradient magnitudes. Contributions are listed below:
\begin{itemize}
  \item We first propose a theoretical framework solving VA-VQE via alternately for solving a gradient magnitudes related multi-objective optimization problem and the original VQE, so that local optima can be avoided from the ansatz perspective and the stability is enhanced compared to other VA-VQE methods.
  \item To mitigate training competitions in training the parameter pool for the initialization of parameters, we first exploit the double $\epsilon$-greedy strategy based on the candidate tree to differentiate ansatzes by their cost function values and the gradient magnitudes, so that local optima can be evaded from the parameter perspective.
  \item We further reduce the size of the search space of the ansatz via applying gate commutation rules and establishing a bijection between the search space and the practical implementations of ansatzes to boost the time efficiency of the optimal ansatz and parameter determination.
  \item We adopt relatively fair criteria for measuring the performance of VA-VQE so that the transverse comparison among methods of VA-VQE can be conducted. As a result, the \name~shows conspicuously better performance on average compared to the structure fixed HEA up to 87.9\% improvement in terms of absolute error, and to the full-randomized VA-VQE method up to 36.0\% and 58.7\% improvement in terms of absolute error and stability (mean square error), respectively, with similar quantum costs (the number of calculations of the cost function).
\end{itemize}


This paper is structured as follows: In Sec.~\ref{sec:background}, we give a brief introduction to the basic knowledge. Then, the gradient-sensitive theoretical framework based on the alternate optimization is proposed in Sec.~\ref{sec:theoretical_framework}. The detail of the practical implementation of \name~is proposed in Sec.~\ref{sec:practical_implementation}. Subsequently, we conduct numerical simulations with relatively fair criteria in Sec.~\ref{sec:simulations} to show the advantages of our proposed method. Finally, we conclude this work in Sec.~\ref{sec:conclusion}. Notice that the examples and pseudocodes of algorithms are summarized in Appendix \ref{appendix:example} and \ref{appendix:alg}, respectively.

\section{Background}\label{sec:background}

\subsection{Variational Quantum Eigensolver}
In this work, we address the variational quantum eigensolver (VQE) tasks identified by sets of tuples $\mathbb{T} = \left\{\left(O_i, \rho_i, f_i\right)\right\}$ aiming at minimizing a cost function
\begin{equation}\label{eq:cost}
  C(\bm{\theta})=\sum_{i} f_i({\rm Tr}\left[O_i U(\bm{\theta}) \rho_i U^\dagger(\bm{\theta})\right]),
\end{equation}
where $\{\rho_i\}$ is the training set formed as n-qubit quantum states, $U(\bm{\theta})$ is a specified parameterized quantum circuit (aka. ansatz) with parameters $\bm{\theta}$, $O_i$ are observables and $f_i$ are bounded second-order differentiable functions that encode the problem with respect to the operand observable $O_i$ and state $\rho_{i}$. Generally,  while implementing the VQE, the quantum computer applies the ansatz $U(\bm{\theta})$ and processes the quantum measurements to determine expectations ${\rm Tr}\left[O_i U(\bm{\theta}) \rho_i U^\dagger(\bm{\theta})\right],\forall i$. The classical computer computes the cost function value and runs an optimization algorithm to find the optimal parameters $\bm{\theta}^*$ that minimize the cost function. We provide an example finding the ground state energy of a quantum system described by a Hamiltonian in Eg.~\ref{eg:vqe}.

\subsection{Quantum Gradient}
In the optimization procedure of large-scale VQE, the gradient-based methods (e.g., gradient descent) are generally more preferred than gradient-free methods (e.g., Nelder-Mead) \cite{mitarai2018Quantum}. In the gradient-based methods, the gradient of the cost function with respect to the parameters is essentially estimated in each optimization iteration.

Without loss of generality, an ansatz can be mathematically defined by
\begin{equation}\label{eq:ansatz}
  U(\bm{\theta}) := \prod_{l=1}^{N_l} U_l(\theta_l)W_l,
\end{equation}
where $U_l(\theta_l) = \exp (-i\theta_lV_l)$, $V_l$ is a Hermitian operator, and $W_l$ is a non-parametrized quantum gate. Then, the partial derivative of an expectation $E(\bm{\theta})={\rm Tr}(OU(\bm{\theta})\rho U^\dagger(\bm{\theta}))$ with respect to the $k$th parameter $\theta_k$ is
\begin{equation}
  \begin{aligned}
    \partial_k E(\bm{\theta}) &\equiv \frac{\partial E(\bm{\theta})}{\partial \theta_k}
    = i{\rm Tr}\left(\left[V_k,U_L^\dagger O U_L\right] U_R \rho U_R^\dagger \right),
  \end{aligned}
\end{equation}
where we use the notations
\begin{align}
  U_R \equiv \prod_{l=1}^{k-1} U_l(\theta_l)W_l,~U_L \equiv \prod_{l=k}^{N_l} U_l(\theta_l)W_l.
\end{align}

In this paper, we apply the parameter-shift rule \cite{mitarai2018Quantum} for gradient estimating. We assume that all parameterized quantum gates are Pauli rotations. Therefore, the partial derivative of $E(\bm{\theta})$ with respect to $\theta_k$ is
\begin{equation}
  \partial_k E(\bm{\theta}) = \frac{1}{2}\left[E(\bm{\theta}+\frac{\pi}{2}\bm{e}_k)-E(\bm{\theta}-\frac{\pi}{2}\bm{e}_k)\right],
\end{equation}
where $\bm{e}_k$ is a vector whose $k$th element is $1$ and others are $0$.

\subsection{Gradient Descent}\label{subsec:gradient_descent}
Gradient descent is literately a gradient-based optimizer that has been generally used in training VQE \cite{bravo-prieto2020Variational,yuan2019Theory,li2022Quantum,beer2020Training}. The kernel process of the gradient descent can be mathematically represented as 
\begin{equation}
  \bm{\theta} \leftarrow \bm{\theta} - \alpha \nabla C(\bm{\theta}),
\end{equation}
where $\alpha$ is the step size. 

In this paper, we exploit a line search to estimate the step size $\alpha$ instead of a fixed one. In each optimization step, $\alpha$ satisfies the Wolfe conditions \cite{wolfe1969Convergence} 
\begin{gather}
  C(\bm{\theta}-\alpha\bm{g})\le C(\bm{\theta}) - c_1\alpha\left\|\bm{g}\right\|_2^2,\label{eq:wolfe_cond_1}\\
  \nabla C(\bm{\theta}-\alpha\bm{g})^T\bm{g}\ge c_2 \left\|\bm{g}\right\|_2^2,
\end{gather}
where $\bm{g}=\nabla C(\bm{\theta})$ and $0<c_1<c_2<1$. To facilitate the $\alpha$ determination, given a reference step size $\alpha_0$, we gradually decrease $\alpha$ from $\alpha_0$ by repeating $\alpha\leftarrow\rho\alpha$ until Eq.~(\ref{eq:wolfe_cond_1}) establishes, where $\rho<0$. In this paper, we empirically set $c_1 = 10^{-4}$ and $\rho=0.618$.

\subsection{Expressibility}
In the absence of prior knowledge about the solution unitaries $\mathbb{U}_s$ of a VQE task, the ability of ansatz to generate a wide range of unitaries $\mathbb{U}$ to guarantee $\mathbb{U}_s\cap\mathbb{U}\ne \emptyset$ is required. The expressibility of an ansatz describes the degree to which it uniformly explores the unitary group $\mathcal{U}(2^n)$, and can be simply considered as the range of unitaries the ansatz can generate. By comparing the uniform distribution of unitaries obtained from $\mathbb{U}$ to the Haar distribution of unitaries from $\mathcal{U}$, the expressibility of an ansatz can be defined by the superoperator \cite{sim2019Expressibility,nakaji2021Expressibility}:
\begin{equation}
  \begin{aligned}
    \mathcal{A}_u^{(t)}:=&\int_{\mathcal{U}(2^n)} d \mu(V)V^{\otimes t}(\,\cdot\,)(V^\dagger)^{\otimes t}-\int_{\mathbb{U}}d U\,U^{\otimes t}(\,\cdot\,)(U^\dagger)^{\otimes t},
  \end{aligned}
\end{equation}
where $d \mu(V)$ is the volume element of the Haar measure and $dU$ is the volume element corresponding to the uniform distribution over $\mathbb{U}$. Here we are especially interested in the expressibility of the ansatz with respect to the input quantum state $\rho$ and the observable $O$
\begin{align}
  \epsilon_{\mathbb{U}}^\rho := \left\|\mathcal{A}^{(2)}_\mathbb{U}(\rho^{\otimes 2})\right\|_2,~\epsilon_{\mathbb{U}}^O := \left\|\mathcal{A}^{(2)}_\mathbb{U}(O^{\otimes 2})\right\|_2.
\end{align}
Small values of $\epsilon_{\mathbb{U}}^\rho$ and $\epsilon_{\mathbb{U}}^O$ indicate the high expressibility of the ansatz.

\subsection{Barren Plateau and Trainability}
As one of the key challenges of VQE, the barren plateau phenomenon exhibited the exponential decrement of the gradient magnitudes with respect to the system size $n$ \cite{mcclean2018Barren} and was generalized that the ansatzes' expressibility \cite{holmes2022Connecting} and the amount of entanglement \cite{patti2021Entanglement, ortizmarrero2021EntanglementInduced,cerezo2021Cost,sack2022Avoiding} play significant roles in leading to barren plateaus. 
We highlight the severity of the BP phenomenon with respect to $n$ and the expressibility of the ansatz by the limited variance of gradient magnitudes
\begin{equation}\label{eq:bp}
  \begin{aligned}
    {\rm Var}[\partial_k C(\bm{\theta})] \le \frac{g(\rho,O,U)}{2^{2n}-1}+f\left(\epsilon_{\mathbb{U}_L}^O,\epsilon_{\mathbb{U}_L}^\rho\right),
  \end{aligned}
\end{equation}
where $g(\rho,O,U)$ is the prefactor in $O(2^n)$,
\begin{equation}
  f(x,t)=4xy + \frac{2^{n+2}\left(x\left\|O\right\|^2_2 + y\left\|\rho\right\|^2_2\right)}{2^{2n}-1},
\end{equation}
and $\mathbb{U}_L$ and $\mathbb{U}_R$ are ensembles of $U_L$ and $U_R$, respectively \cite{holmes2022Connecting}. The first term on the right in Eq.~(\ref{eq:bp}) indicates the variance of 2-design ansatz and is in $O(1/2^n)$, and the second term is the expressibility-dependent correction. From the Chebyshev's inequality, the trainability of an ansatz can be described by 
\begin{equation}
  \begin{aligned}
    {\rm Pr}\left[\left|{\partial_k C(\bm{\theta})}\right|\ge \delta\right] \le\frac{{\rm Var}[\,\partial_k C(\bm{\theta})]}{\delta^2},\,\forall \delta>0.
  \end{aligned}
\end{equation}
When the ansatz exhibits the BP phenomenon, the probability decreases exponentially with respect to $n$, which indicates that the precision to determine a cost-minimizing direction is exponentially large to $n$ \cite{bilkis2022Semiagnostic,cerezo2021Higher,arrasmith2021Effect}. Notably, an ansatz with higher expressibility suffers lower trainability since it exhibits a more severe BP phenomenon.

Remarkably, there exists another kind of barren plateaus, the noise-induced barren plateaus \cite{wang2021Noiseinduced}, caused by the imperfect quantum hardware. The cost function value concentrates exponentially around its average as the influence of noise accumulates, since the noise models acting throughout the ansatz map the input state toward the fixed point of the noise model \cite{wang2021Noiseinduced,stilckfranca2021Limitations}. 

This challenging phenomenon cannot simply be escaped by changing the optimizer \cite{arrasmith2021Effect}. While several attempts have been made to mitigate the severity of the barren plateau \cite{pesah2021Absence,volkoff2021Large,skolik2021Layerwise,grant2019Initialization,verdon2019Learning}, it is widely accepted that the variable ansatz strategy is promising to address this issue via automatically balancing the expressibility, the influence of noise, and the trainability.

\subsection{Hardware Constraints}\label{subsec:hardware_constraints}
Generally, only a limited number of gates are available on a practical quantum computer and the two-qubit gates are only allowed to be applied on a specific set of qubit pairs. These available gates are known as native gates of the quantum hardware. In this paper, we assume the native gates on an $n$-qubit quantum system to be $R_y$, $R_z$ and CNOT mathematically represented as
\begin{gather}
  R^q_y(\theta) = e^{-i \frac{\theta}{2}\sigma_y} =\begin{bmatrix}
    \cos\frac{\theta}{2}& -\sin\frac{\theta}{2}\\
    \sin\frac{\theta}{2}&  \cos\frac{\theta}{2}
  \end{bmatrix},\\
  R^q_z(\theta) = e^{-i \frac{\theta}{2}\sigma_z} =\begin{bmatrix}
    e^{-i{\theta}/{2}}& 0\\
    0& e^{i{\theta}/{2}}
  \end{bmatrix},\\
  \mathrm{CNOT}^{q_a,q_b}=\left|0\right>\left<0\right|^{q_a}\otimes I^{q_b} + \left|1\right>\left<1\right|^{q_a}\otimes X^{q_b},
\end{gather}
where superscript $q$, $q_a$, and $q_b$ indicate the qubits on which the quantum gates act, $\sigma_y$ and $\sigma_z$ are Pauli operators mathematically represented as 
\begin{align}
  \sigma_y = \begin{bmatrix}
    0&-i\\
    i&0
  \end{bmatrix},\sigma_z = \begin{bmatrix}
    -1&0\\
    0&1
  \end{bmatrix},
\end{align}
$I$ is the identity quantum operation and $X$ is the quantum \emph{not} gate mathematically represented as
\begin{equation}
  I = \begin{bmatrix}
    1&0\\
    0&1
  \end{bmatrix},X = \begin{bmatrix}
    0&1\\
    1&0
  \end{bmatrix}.
\end{equation}
Moreover, $R^q_z$ and $R^q_y$ can be applied for all $q\in\{1,2,\hdots,n\}$, while CNOT gates $\mathrm{CNOT}^{q_a,q_b}$ are unidirectionally available on adjacent qubit pairs, i.e., $q_b = (q_a + 1) \mod n$. Note that the proposed framework can be easily adjusted for different native gate sets.


\section{Theoretical Framework}\label{sec:theoretical_framework}

In this section, we introduce the theoretical framework derived from the problem of VQE. Then, we equivalently transform the solving procedure into solving a series of multi-objective optimization problems related to gradient magnitudes.

Recall that a VQE task can be considered as minimizing a cost function, which is summarized in {Prob.~\ref{prob:vqe}}.
\begin{problem}[VQE]\label{prob:vqe}
  Given an ansatz $U(\bm{\theta})$ and a task $\mathbb{T}$, the problem of VQE is to find parameters that minimize the cost function $C(\bm{\theta})$, i.e.,
  \begin{equation}
    \min_{\bm{\theta}\in \mathbb{D}^{N_p}} C(\bm{\theta})=\sum_{i} f_i({\rm Tr}\left[O_i U(\bm{\theta}) \rho_i U^\dagger(\bm{\theta})\right]),
  \end{equation}
  where $\mathbb{D} \subset \mathbb{R}$, $N_p$ is the cardinality of trainable parameters $\bm{\theta}$.
\end{problem}

While introducing the VA-VQE framework as summarized in {Prob.~\ref{prob:vavqe}}, the structure of ansatz $U$ is treated as a variable that needs to be optimized in the cost function
\begin{equation}\label{eq:cost_with_ansatz}
  C(U,\bm{\theta})=\sum_{i} f_i({\rm Tr}\left[O_i U(\bm{\theta}) \rho_i U^\dagger(\bm{\theta})\right]).
\end{equation}

\begin{problem}[VA-VQE]\label{prob:vavqe}
  Given a search space of ansatz $\mathbb{S}$ and a task $\mathbb{T}$, the problem of VA-VQE is to find ansatzes and corresponding parameters that minimize the cost function $C(U,\bm{\theta})$, i.e.,
  \begin{align}
      \min_{U\in \mathbb{S},~\bm{\theta}\in\mathbb{D}^{N_p}}C(U,\bm{\theta}),\label{eq:vavqe}
  \end{align}
  where $C(U,\bm{\theta})$ forms as Eq.~(\ref{eq:cost_with_ansatz}), $\mathbb{D} \subset \mathbb{R}$, and $N_p$ is the cardinality of trainable parameters $\bm{\theta}$.
\end{problem}

Practically, the VA-VQE methods automatically construct ansatzes by quantum gates from a given gate set $\mathbb{G}$, i.e. $\mathbb{S} = \mathbb{G}^\infty$, and find ones with trained parameters that minimize the cost function. The search space of ansatz actually scales infinitely since the depth of ansatz circuits can be infinitely large. However, the trainability of the ansatz is substantially limited by the number of quantum gates because the expressibility and the impact of noise of the ansatz may increase with respect to the number of quantum gates, which aggravates the BP phenomenon. On the other hand, it is intractable to search for solutions with enormous (even infinite) search space of ansatz. Therefore, a fixed or gradually increased maximum number $n_g$ of quantum gates is introduced as $\mathbb{S}=\mathbb{G}^{N_g}$, where $N_g$ is sufficiently large that $\exists U \in \mathbb{G}^{N_g}$ such that $\mathbb{U} \cap \mathbb{U}_s \ne \emptyset$. Nevertheless, determining exact solutions requires solving combinatorial optimization in an exponentially large search space, which conceals the efficiency of VQE. Most methods \cite{du2020Quantum,zhang2021Differentiable,bilkis2022Semiagnostic,meng2021Quantuma,chivilikhin2020MoGVQE} attempt to provide approximate solutions in polynomial complexity with respect to $g$. Although the gradient is generally estimated for the optimization of parameters, its magnitude is neglected to provide a guideline for the optimization of the structure of ansatz. Therefore, several undesirable phenomena listed below may occur:
\begin{enumerate}
  \item The ansatz which exhibits severe barren plateau, and thus the gradient magnitude is highly close to $0$, is selected to test the cost function value multiple times.
  \item The trivial ansatz whose parameters are completely trained, i.e. the gradient magnitude is highly close to $0$, is selected as the optimal ansatz since it temporarily cost less than potentially better ansatz with incompletely trained parameters, i.e. the gradient magnitude is significantly larger than $0$.
\end{enumerate}
These phenomena result in the local optimal of the VA-VQE from the ansatz perspective and instability of optimal ansatz and cost function value outputs.

Here we first introduce the gradient magnitude in solving the VA-VQE problem to explicitly supervise the severity of the BP phenomenon and the completeness of parameter training of ansatzes. We define a subproblem the gradient-related ansatz multi-objective optimization (GRAMO) in {Prob.~\ref{prob:multi_obj}}.
\begin{problem}[GRAMO]\label{prob:multi_obj}
  Given a search space of ansatzes $\mathbb{S}$, a task $\mathbb{T}$, and a set of parameters $\mathbb{P}$ such that $\exists! \bm{\theta}_U\in\mathbb{P}$, $\forall U\in\mathbb{S}$, the problem of GRAMO is to find ansatzes that minimize the cost function $C(U,\bm{\theta}_U)$ and maximize the gradient magnitude of the cost function, i.e.,
  \begin{align}\label{eq:multi_obj}
    \min_{U\in \mathbb{S}}C(U,\bm{\theta}_U),~ \max_{U\in \mathbb{S}}\frac{\left\|\nabla C(U,\bm{\theta}_U)\right\|_2}{|\bm{\theta}_U|}
  \end{align}
  where $C(U,\bm{\theta}_U)$ forms as Eq.~(\ref{eq:cost_with_ansatz}), $\nabla C(U,\bm{\theta}_U)$ is the gradient of $C(U,\bm{\theta}_U)$ with respect to $\bm{\theta}_U$, $\bm{\theta}_U \in \mathbb{P}$, and $|\bm{\theta}_U|$ is the cardinality of parameters in $\bm{\theta}_U$.
\end{problem}
The (1-rank) solution of {Prob.~\ref{prob:multi_obj}} is defined as the non-dominated set $\mathbb{U}_n$ consisting of all $U$ such that no $V \in \mathbb{S}$ simultaneously establishes inequalities
\begin{align}
  C(V,\bm{\theta}_{V}) &< C(U,\bm{\theta}_U),\label{eq:cost_ineq}\\
  \frac{\left\|\nabla C({V},\bm{\theta}_{V})\right\|_2}{|\bm{\theta}_{V}|} &\ge \frac{\left\|\nabla C(U,\bm{\theta}_U)\right\|_2}{|\bm{\theta}_U|}.\label{eq:grad_ineq}
\end{align}
As a result, structures of ansatzes with low cost function values or large gradient magnitudes, in other words, high absolute performance or high potentiality, are selected as solutions. Furthermore, the $k$-rank solution $\mathbb{U}_{n}(k)$ of the problem is defined by $\mathbb{U}_{n}(k):=\mathbb{U}_{n}({k-1})\cap\mathbb{U}_{n}^\prime$, where $\mathbb{U}_{n}^\prime$ is the 1-rank solution of {Prob.~\ref{prob:multi_obj}} with search space $\mathbb{S}\setminus \mathbb{U}_{n}^{k-1}$.

We solve the VA-VQE problem by alternately solving {Prob.~\ref{prob:vqe}} and {Prob.~\ref{prob:multi_obj}} as summarized in Alg.~\ref{al:alt_opt}. When the gate set $\mathbb{G}$ and the task $\mathbb{T}$ are specified, for any initialization of $\mathbb{P}$, the VA-VQE can be solved iteratively. At each iteration, the non-dominated set $\mathbb{U}_n$ is determined by solving Prob.~\ref{prob:multi_obj} the GRAMO with $\mathbb{P}$. Then, the optimal parameters $\theta_U \in \mathbb{P}$ are updated via solving Prob.~\ref{prob:vqe} the VQE, $\forall U\in\mathbb{U}_n$. The alternate optimization procedure terminates until $\mathbb{U}_n$ and $\mathbb{P}$ are converged.

\begin{theorem}[Convergence] \label{th:convergence}
  $\mathbb{U}_n$ and $\mathbb{P}$ in Alg.~\ref{al:alt_opt} converge such that $\forall \left(U\in\mathbb{U}_n,\bm{\theta}_U\in\mathbb{P}\right)$ are solutions of Prob.~\ref{prob:vavqe} for all $\mathbb{P}_0$ satisfied that $\exists V \in \mathbb{S}$ such that
  \begin{gather}
    \exists \bm{\theta}^*,~C(V,\bm{\theta}^*) = \min_{U,\bm{\theta}}C(U,\bm{\theta}),\label{eq:exist_optimal_theta}\\
    \forall \delta >0,~C(V,\bm{\theta}_V\in\mathbb{P}_0) \ne \max_{\bm{\theta}\in \Theta_{V,\delta}}C(V,\bm{\theta}),\label{eq:not_local_maximum}
  \end{gather}
  where
  \begin{equation}
    \Theta_{V,\delta} = \{\bm{\theta}\left|\|\bm{\theta}-\bm{\theta}_V\|_2<\delta\right.\}.
  \end{equation}
\end{theorem}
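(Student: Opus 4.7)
The plan is to combine a monotonicity argument for the inner VQE updates with a dominance argument about the non-dominated set at the fixed point of Alg.~\ref{al:alt_opt}.

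First I would establish that the sequence $\{C(U, \bm{\theta}_U)\}$ along the iterations is monotonically non-increasing for every $U$ ever selected into $\mathbb{U}_n$. This is immediate from the Wolfe condition Eq.~(\ref{eq:wolfe_cond_1}) employed in the line search of Sec.~\ref{subsec:gradient_descent}: every VQE gradient-descent step shrinks the cost by at least $c_1 \alpha \|\bm{g}\|_2^2$, so the sequence strictly decreases whenever $\nabla C(U, \bm{\theta}_U) \ne 0$. Since $C$ is bounded below by $C^* := \min_{U, \bm{\theta}} C(U, \bm{\theta})$, the sequence converges, giving convergence of $\mathbb{P}$. In particular, at the fixed point every $U \in \mathbb{U}_n$ must satisfy $\nabla C(U, \bm{\theta}_U) = 0$, for otherwise the VQE subroutine would still move $\bm{\theta}_U$ and contradict stationarity of $\mathbb{P}$.

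Next I would show that the globally optimal ansatz $V$ of Eq.~(\ref{eq:exist_optimal_theta}) is retained in the non-dominated set and that its parameters descend to $\bm{\theta}^*$. Because $C(V, \bm{\theta}_V) \ge C^*$, no ansatz can make the strict inequality Eq.~(\ref{eq:cost_ineq}) hold against $V$ once $C(V, \bm{\theta}_V) = C^*$, so $V$ is non-dominated at that point. Before then, Eq.~(\ref{eq:not_local_maximum}) guarantees $\bm{\theta}_V \in \mathbb{P}_0$ is not a local maximum of $C(V, \cdot)$, so the Wolfe line search delivers strict cost decrease whenever $V$ belongs to $\mathbb{U}_n$; moreover, the gradient-maximization objective in Eq.~(\ref{eq:multi_obj}) keeps $V$ inside $\mathbb{U}_n$ as long as $\|\nabla C(V, \bm{\theta}_V)\|_2 > 0$, so iterated VQE drives $\bm{\theta}_V$ to $\bm{\theta}^*$. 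The limit characterization then follows from the dominance relation itself: if some $(U, \bm{\theta}_U) \in \mathbb{U}_n$ had $C(U, \bm{\theta}_U) > C^*$, then $V$ at $\bm{\theta}^*$ would make Eq.~(\ref{eq:cost_ineq}) strict and Eq.~(\ref{eq:grad_ineq}) trivial — both sides equal zero by the stationarity established above — contradicting $U \in \mathbb{U}_n$. Hence every $(U, \bm{\theta}_U) \in \mathbb{U}_n$ attains $C^*$ and is a solution of Prob.~\ref{prob:vavqe}.

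The hard part will be the middle step: rigorously arguing that VQE on the ansatz $V$ converges to the global minimizer $\bm{\theta}^*$ rather than to some non-global stationary point. Condition Eq.~(\ref{eq:not_local_maximum}) only excludes initialization at a local maximum, and Wolfe-step gradient descent alone does not in general escape non-global local minima. The argument must lean on the multi-objective selection itself — any $V$ stuck at a non-global stationary point with cost above $C^*$ would eventually be dominated by some ansatz whose parameters still carry a nonzero gradient and continue to be refined — so formalizing this no-stagnation property, and showing that such refinement in turn reaches $C^*$, is the delicate part of the proof.
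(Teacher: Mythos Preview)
Your proposal has two related gaps.

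First, you read the inner step of Alg.~\ref{al:alt_opt} as Wolfe-line-search gradient descent, but the call is $\mathbf{SolveProb\ref{prob:vqe}}(U)$, and Prob.~\ref{prob:vqe} asks for a \emph{global} minimizer of $C(U,\cdot)$. The paper's proof uses this literally: once $U\in\mathbb{U}_n$, the update sets $\bm{\theta}_U$ to $\arg\min_{\bm{\theta}}C(U,\bm{\theta})$, so its gradient becomes exactly zero after a single outer iteration. This dissolves the ``hard part'' you flag at the end; non-global stationary points of $C(U,\cdot)$ never arise in the idealized framework of Theorem~\ref{th:convergence}. They are a concern only for the practical implementation of Sec.~\ref{sec:practical_implementation}, which is not what the theorem is about.

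Second, the claim that ``the gradient-maximization objective keeps $V$ inside $\mathbb{U}_n$ as long as $\|\nabla C(V,\bm{\theta}_V)\|_2>0$'' is false as stated: any $W$ with strictly lower cost and at least as large a normalized gradient dominates $V$ via Eqs.~(\ref{eq:cost_ineq})--(\ref{eq:grad_ineq}), so $V$ need not be non-dominated at every iteration. The paper handles this differently. It partitions $\mathbb{S}$ into four sets $\mathbb{U}_+^*,\mathbb{U}_+,\mathbb{U}_-^*,\mathbb{U}_-$ according to whether the ansatz can attain the global minimum and whether its current $\bm{\theta}_U$ already minimizes $C(U,\cdot)$, and proves (Lemma~\ref{lemma:finally_trained}) that the ansatz with the \emph{maximum} normalized gradient is always non-dominated and hence is trained (to zero gradient) in that iteration; then the next-highest-gradient ansatz becomes non-dominated, and so on. This peeling argument guarantees that every ansatz with nonzero gradient is eventually driven into $\mathbb{U}_+^*\cup\mathbb{U}_-^*$, so the hypothesis on $V$ forces $\lim_{t\to\infty}\mathbb{U}_+^*\ne\emptyset$, after which a direct dominance comparison (all surviving gradients are zero, and elements of $\mathbb{U}_+^*$ strictly beat $\mathbb{U}_-^*$ on cost) yields $\lim_{t\to\infty}\mathbb{U}_n=\lim_{t\to\infty}\mathbb{U}_+^*$. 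Your final dominance step is essentially this last comparison, but the route to ``every limiting member of $\mathbb{U}_n$ has zero gradient and $\mathbb{U}_+^*$ is nonempty'' needs the max-gradient peeling, not persistence of $V$ alone.
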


Theorem \ref{th:convergence} (c.f. Appendix \ref{appendix:proof_convergence} for the proof) indicates that solutions of VA-VQE can be determined by Alg.~\ref{al:alt_opt}.

\section{Solving VQE Tasks with Gradient Sensitive Variable Ansatz}\label{sec:practical_implementation}
In this section, the gradient-sensitive alternate framework (\name) for VA-VQE is proposed in detail. As the VA-VQE problem described, \name~takes the gate set $\mathbb{G}$ and the task $\mathbb{T}$ as inputs. Based on the gate commutation rules, the search space of ansatz $\mathbb{S}(N_l)$ under the maximum number of layers $N_l$ is reduced to boost the time efficiency of the optimal ansatz and parameters determination. Then, the quasi-optimal ansatz and corresponding trained parameters are output through three stages: \emph{pool training}, \emph{alternate training}, and \emph{VQE retraining}. In the stage of \emph{pool training}, a candidate tree $T$ is constructed for double $\epsilon$-greedy sampling and a parameter pool in which parameters are shared among ansatzes with similar structures is trained via exploiting the double $\epsilon$-greedy strategy based on the candidate tree. As a result, a reasonable set of parameters $\mathbb{P}_0$ is generated as the initialization of parameters for the next stage. Based on the multi-objective genetic algorithm, the \emph{alternate training} applies the framework described in Alg.~\ref{al:alt_opt} to solve the VA-VQE problem via alternately solving the GRAMO in Prob.~\ref{prob:multi_obj} and the VQE in Prob.~\ref{prob:vqe}. Since the evolutionary algorithm is applied, the output ansatz is quasi-optimal and corresponding parameters may be incompletely trained. Therefore, the third stage \emph{VQE retraining} is required to guarantee the completion of parameter training of the quasi-optimal ansatz. We summarize the \name~as Alg.~\ref{al:gsa}.

Besides, there are several parameters needed to be defined before actually running (c.f. Tab. \ref{tab:hyperparameters} in Appendix \ref{appendix:hyperparameters} for the summarizing). For the entire proposed framework, they are, respectively, the maximum number of layers $N_l$, the reference step size $\alpha_0$, the convergence threshold $\xi$, and the probabilities of greedy selection for the double $\epsilon$-greedy strategy $\epsilon_1$ and $\epsilon_2$. In the \emph{pool training}, they are, respectively, the number of sampled ansatzes $N_{s1}$, the maximum number of ranks of ansatzes whose corresponding parameters are updated $N_{r1}$, the stable threshold for terminating the main process $N_{t1}$, the maximum iteration times in the prethermalization $N_{i0}$, and the maximum iteration times in the main process $N_{i1}$. In the \emph{alternate training}, they are, respectively, the population size $N_{s2}$, the maximum number of ranks of ansatzes whose corresponding parameters are updated $N_{r2}$, the stable threshold $N_{t2}$, the optimization step in a generation $N_{o}$, and the maximum iteration times $N_{i2}$. In the \emph{VQE retraining}, they include the maximum iteration times $N_{i3}$.

This section is organized as follows: In Sec.~\ref{subsec:search_space}, we concretely propose the structure of the search space of ansatz for the \name and conduct the size reduction to the search space based on the gate commutation rules. Subsequently, the \emph{pool training}, \emph{alternate training}, and \emph{VQE retraining} stages are explained in 
Sec.~\ref{subsec:pool_training}, \ref{subsec:alternate_training} and \ref{subsec:vqe_retraining}, respectively.

\begin{figure}[t]
  \centering
	\includegraphics[width=0.5\textwidth]{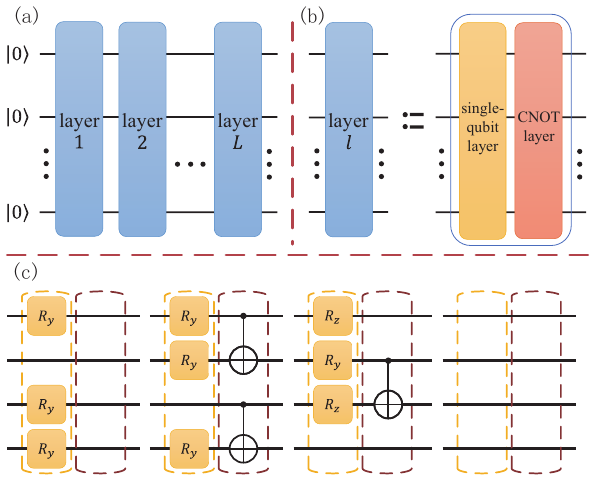}
  \caption{\label{fig:layer} The layer-by-layer ansatz (a) with each layer being decomposed into two sublayers, i.e., (b) a single-qubit gate layer and a CNOT layer; and (c) four exemplary constructions of layers. 
  }
\end{figure}

\subsection{Search Space of Ansatz \label{subsec:search_space}}
Without violation of Eq.~\ref{eq:ansatz}, we compose an ansatz in a layer-by-layer fashion as shown in Fig.~\ref{fig:layer}. To be specific, each layer consists of a set of native disjoint single-qubit gates followed by a set of hardware-compliant disjoint CNOT gates. It can be easily derived that the number of all possible structures $N_s$ of a layer is exponentially large with respect to the number of qubits $n$. We use the term state denoted by $\astate_j$ to indicate the $j$th structure is used. Then, an ansatz can be identified by a tuple of states 
\[
  \left(\astate^1, \astate^2, \hdots, \astate^{N_l}\right),
  \]
where $\astate^l \in S := \left\{\astate_1, \astate_2, \hdots, \astate_{N_s}\right\}$. As a result, the search space of ansatz $\mathbb{S}(N_l)$ can be specifically represented by a tree with $N_l + 1$ layers (from layer $0$ to $N_l$). A node at the layer $l>0$ in the tree is uniquely identified by its parent node and the state, i.e., $\anode^l := (\anode^{l-1}, \astate^l)$, and linked to its child nodes. In layer $0$, the root node of the tree is defined with no parent node and the state $\astate^0$ in which no layer information is stored. Furthermore, we call the nodes in layer $N_l$ linked to no child nodes the leaf nodes. Finally, a path from the root node to a leaf node $\left(\anode^0, \anode^1,\hdots,\anode^{N_l}\right)$ represents an ansatz $\left(\astate^1, \astate^2, \hdots, \astate^{N_l}\right)$.

The unrefined $\mathbb{S}(N_l)$ is intuitively a full $N_s$-ary tree with $(N_s)^{N_l}$ leaf nodes. However, this simple-minded construction can not establish the bijection between practical implementations of ansatzes and paths (c.f. Eg.~\ref{eg:not_bijection}). Moreover, gates in distinct layers may be deletable or mergeable (c.f. Eg.~\ref{eg:mnd}). To further eliminate the redundant ansatzes and improve the efficiency, the paths violating the following cross-layer constraints based on gate commutation rules will be pruned. 

\paragraph{Constraints:} For node $\anode^l:=(\anode^{l-1},\astate^l)$ in the path $(\anode^0,\anode^1, \hdots,\anode^l)$:
\begin{enumerate}
  \item $R_z^q,\mathrm{CNOT}^{q,q^\prime}\notin\astate^l$, if $R_y^q$ and $\mathrm{CNOT}^{q^{\prime\prime},q}$ $\notin\astate^k$, $\forall q$ with $\left|0\right>$ initialization, $q^\prime$ and $ q^{\prime\prime}$, $l>k>0$;
  \item $\mathrm{CNOT}^{q_1,q_2}\notin\astate^l$, if $\exists \mathrm{CNOT}^{q_1,q_2}\in \astate^k$ such that $R_y^{q_1}$, $R_y^{q_2}$, $R_z^{q_2}$, $\mathrm{CNOT}^{q^\prime,q_1}$, $\mathrm{CNOT}^{q_2, q^\prime}$, and $\mathrm{CNOT}^{q^\prime,q_2}$ $\notin \astate^m$, $\forall q, q^\prime$, $k<m<l<0$;
  \item $R_z^q\notin\astate^l$, if $R_y^q$ and $\mathrm{CNOT}^{q^{\prime},q}$ $\notin \astate^{l-1}$, $\forall q, q^\prime$, $l>0$;
  \item $R_y^q\notin\astate^l$, if $R_z^q$, $\mathrm{CNOT}^{q,q^{\prime}}$, and $\mathrm{CNOT}^{q^{\prime},q}\notin \astate^{l-1}$, $\forall q, q^\prime$, $l>1$.
  \item $\astate^l$ is empty if $\astate^{l-1}$ is empty, $l>1$.
\end{enumerate}

The first constraint follows the fact that $R_z^q$ and $\mathrm{CNOT}^{q,q\prime}$ preserve the state of the quantum system when the quantum state of $q$ is $\left|0\right>$. The second constraint avoids two consecutive CNOT gates with identical control and target qubits. As a result, the first two constraints eliminate deletable combinations of quantum gates. The mergeable combinations are extinguished by constraints 3 and 4 by prohibiting consecutive $R_z$ and $R_y$, respectively. Finally, the bijection between practical implementations of ansatzes and paths is established by the conjunction of constraints 3 to 5. Consequently, the size of the search space $\left|\mathbb{S}(N_l)\right|$ is significantly reduced. We provide an example Eg.~\ref{eg:space_size} to demonstrate the efficiency obtained from our constraints.

\subsection{Pool training \label{subsec:pool_training}}
In this subsection, the \emph{pool training} stage is presented in detail. Based on the weight-sharing policy, the number of parameters required to be trained is reduced to $\mathcal{O}(N_l)$. Moreover, the double $\epsilon$-greedy strategy is exploited accompanied by a candidate tree to mitigate the training competitions among ansatzes. We treat the \emph{pool training} as a one-shot training program. At each iteration, several ansatzes are randomly sampled and estimated. According to the estimated performance, several temporarily outstanding ansatzes are selected to update parameters in specified steps (typically one step) via a gradient-based optimizer.

Recall that solving VQ-VQE in the framework as described in Alg.~\ref{al:alt_opt} requires the initialization of $\mathbb{P}_0$ in which each structure of ansatz links to an independent set of parameters. Intuitively, that parameters $\bm{\theta}_U \in \mathbb{P}_0$ substantially reflect the actual performance of ansatz $U$, i.e., $\bm{\theta}_U \approx \arg\min_{\bm{\theta}}C(U,\bm{\theta})$, facilitates the optimal ansatz determination. Therefore, we conduct a pre-training of parameters before the alternate optimization solving VA-VQE instead of random initialization. Unfortunately, the size of $\mathbb{P}_0$ is exponentially large with respect to $n$ and $N_l$ resulting from the exponentially large $\mathbb{S}(N_l)$. It is impractical to adequately train $\mathbb{P}_0$ efficiently. 

Instead, we construct a parameter pool with linear size with respect to $N_l$ via applying the weight-sharing policy and train the pool to eventually derive $\mathbb{P}_0$. The parameter pool can be matrix-like defined as
\begin{equation}\label{eq:theta_expand}
  P := \begin{bmatrix}
    \bm{\theta}_{\astate_1,1}& \bm{\theta}_{\astate_1,2}& \hdots& \bm{\theta}_{\astate_1,N_l}\\
    \bm{\theta}_{\astate_2,1}& \bm{\theta}_{\astate_2,2}& \hdots& \bm{\theta}_{\astate_2,N_l}\\
    \vdots &\vdots& \ddots& \vdots\\
    \bm{\theta}_{\astate_{N_s},1}& \bm{\theta}_{\astate_{N_s},2}& \hdots& \bm{\theta}_{\astate_{N_s},N_l}\\
  \end{bmatrix},
\end{equation}
where $\bm{\theta}_{k,l}$ represents the parameters at the $l$th layer corresponding to the state $\astate_k$. Then, parameters of an ansatz $\left(\anode^0, \anode^1,\hdots,\anode^{N_l}\right)$ is 
\begin{equation}
  \bm{\theta} := \bm{\theta}_{\astate^1,1} \oplus \bm{\theta}_{\astate^2,2} \oplus \hdots \oplus \bm{\theta}_{\astate^{N_l},N_l},
\end{equation}
where $\oplus$ indicates the direct sum such that 
\[
  \begin{bmatrix}
    a_1\\ a_2\\ \vdots \\a_N
  \end{bmatrix} \oplus \begin{bmatrix}
    b_1\\ b_2\\ \vdots \\b_N
  \end{bmatrix} = \begin{bmatrix}
    a_1\\ \vdots \\a_N \\ b_1 \\ \vdots \\b_N
  \end{bmatrix}.
\]
It can be simply derived that parameters in identical layers and states are shared among ansatzes, which is the direct effect of applying the weight-sharing policy. Then, the number of trainable parameters is in $\mathcal{O}({N_sN_l})$, which is linear with respect to $N_l$. The exponentially reduction significantly boosts the efficiency of parameter training. However, the training competitions are therefore introduced.

Inspired by the $\epsilon$-greedy strategy from traditional machine learning \cite{you2020Greedynas}, we propose the double $\epsilon$-greedy strategy based on the candidate tree to mitigate the training competitions. Remarkably, our proposed method not only differentiates `good' and `bad' ansatzes but also ansatzes among the two categories.

Recall that the search space $\mathbb{S}(N_l)$ can be represented by a tree. It is intuitive that constructing a tree to save potentially `good' ansatzes and discarding potentially `bad' ansatzes are reasonable for the differentiation. The tree spanned by paths representing potentially `good' ansatzes is named the candidate tree. Each node $\anode$ in the tree maintains a {leaf count} $c_l(\anode)$ to indicate the number of leaf nodes below $\anode$ and a {training count} $c_t(\anode)$ to record the total times of training of the node. At each path sampling procedure, the \name~samples a path from the candidate tree with the probability $\epsilon_1$ and the $\mathbb{S}(N_l)$ uniformly with the probability $1-\epsilon_1$. While sampling from the candidate tree, nodes are successively sampled from the root to a leaf. Since only potentially `good' ansatzes are appended to the candidate tree and recorded the training count of nodes via the one-shot training scheme, nodes in the candidate tree with a large training count may intuitively have more probability to construct potentially `good' ansatzes. Let the last selected node be $\anode^{l-1}$, $1\le l\le N_l$, linked to child nodes $\anode_1^{l},\anode_2^{l},\hdots,\anode_{N_\anode}^{l}$. Then, the next node is sampled as $\anode_k^l$ with the probability 
\begin{equation}
  {\rm Pr}(\anode^l_k;\eta) = \frac{c_l(\anode^l_k) + \eta c_t(\anode^l_k)}{\sum_{i=1}^{N_{\anode}}c_l(\anode^l_i) + \eta c_t(\anode^l_i)},
\end{equation}
where $\eta=1$ with the probability $\epsilon_2$ representing the greedy sampling and $\eta=0$ with the probability $1-\epsilon_2$ representing the uniform sampling. The ansatz sampling procedure is summarized as Alg.~\ref{al:sample}.

As depicted in Fig.~\ref{fig:pool_training}, at each main process iteration of the \emph{pool training}, the \name~samples $N_{s1}$ paths exploiting the double $\epsilon$-greedy strategy. Then, the cost function values and gradients of sampled ansatzes are estimated with corresponding parameters from the parameter pool. Subsequently, the $N_{r1}$-rank solution $\mathbb{U}_n({N_{r1}})$ of Prob.~\ref{prob:multi_obj} is determined among the sampled ansatzes. For each ansatz $U\in \mathbb{U}_n({N_{r1}})$, corresponding parameters in the parameter pool are updated in one step via gradient descent optimizer. The main process terminates at iteration $N_{i1}$ or when $c_l(\anode^0)$ is stable that the value has been preserved for $N_{t1}$ iterations which means there is no new path appended on the tree. As a result, the main process of the \emph{pool training} is summarized as Alg.~\ref{al:main_pool_training}.

\begin{figure}[t]
  \centering
	\includegraphics[width=0.5\textwidth]{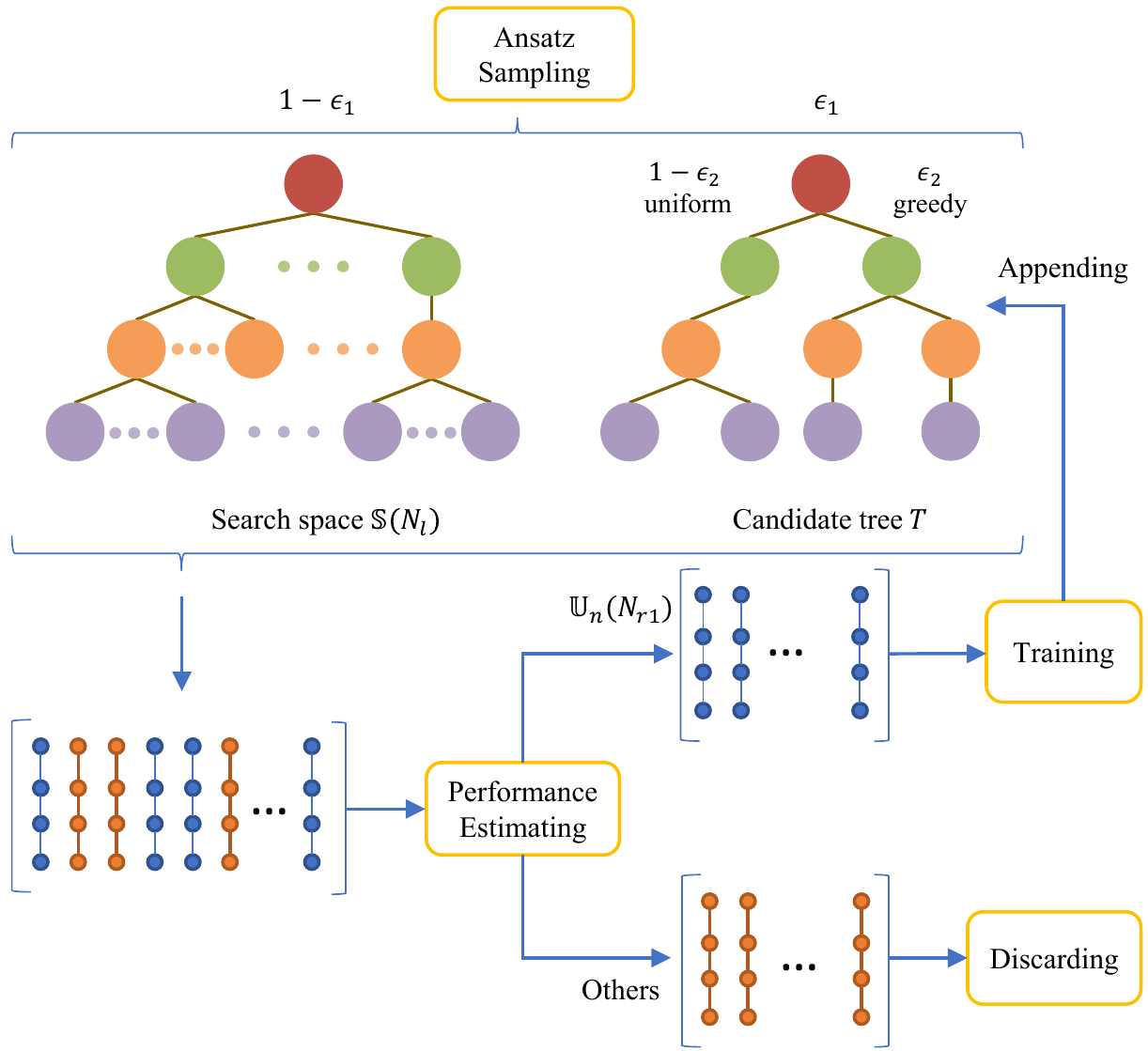}
  \caption{\label{fig:pool_training} Single iteration of \emph{pool training} with $N_l=3$.
  }
\end{figure}

For the stability of \emph{pool training}, we provide a prethermalization before the main process. Given an $N_{i0}$, at iteration $i$, the \name~processes as the main process iteration with $\epsilon_1^\prime = {(i-1)\epsilon_1}/{N_{i0}}$ instead of $\epsilon_1$. The prethermalization of \emph{pool training} is summarized as Alg.~\ref{al:pre_pool_training}.

Finally, the \emph{pool training} can be described by Alg.~\ref{al:pool_training}. After the initialization of $P$ and $T$, the prethermalization and the main process are conducted to train $P$ as well as $T$. Subsequently, the set of parameters $\mathbb{P}_0$ can be constructed by expanding $P$ that $\bm{\theta}_U$ is generated by Eq.~(\ref{eq:theta_expand}) for any ansatz $U\in\mathbb{S}(N_l)$. The $\mathbb{P}_0$ and $T$ are output for the next stage.

\subsection{Alternate training \label{subsec:alternate_training}}
Recall that the VA-VQE can be solved via alternately solving Prob.~\ref{prob:multi_obj} and Prob.~\ref{prob:vqe}. We exploit the multi-objective genetic algorithm with novel modification. The set of parameters $\mathbb{P}$ is initialized as $\mathbb{P}_0$. The individuals in the first generation are sampled independently via the double $\epsilon$-greedy strategy as described in Sec.~\ref{subsec:pool_training} to compose the initial population. 

\begin{figure}[t]
  \centering
	\includegraphics[width=0.5\textwidth]{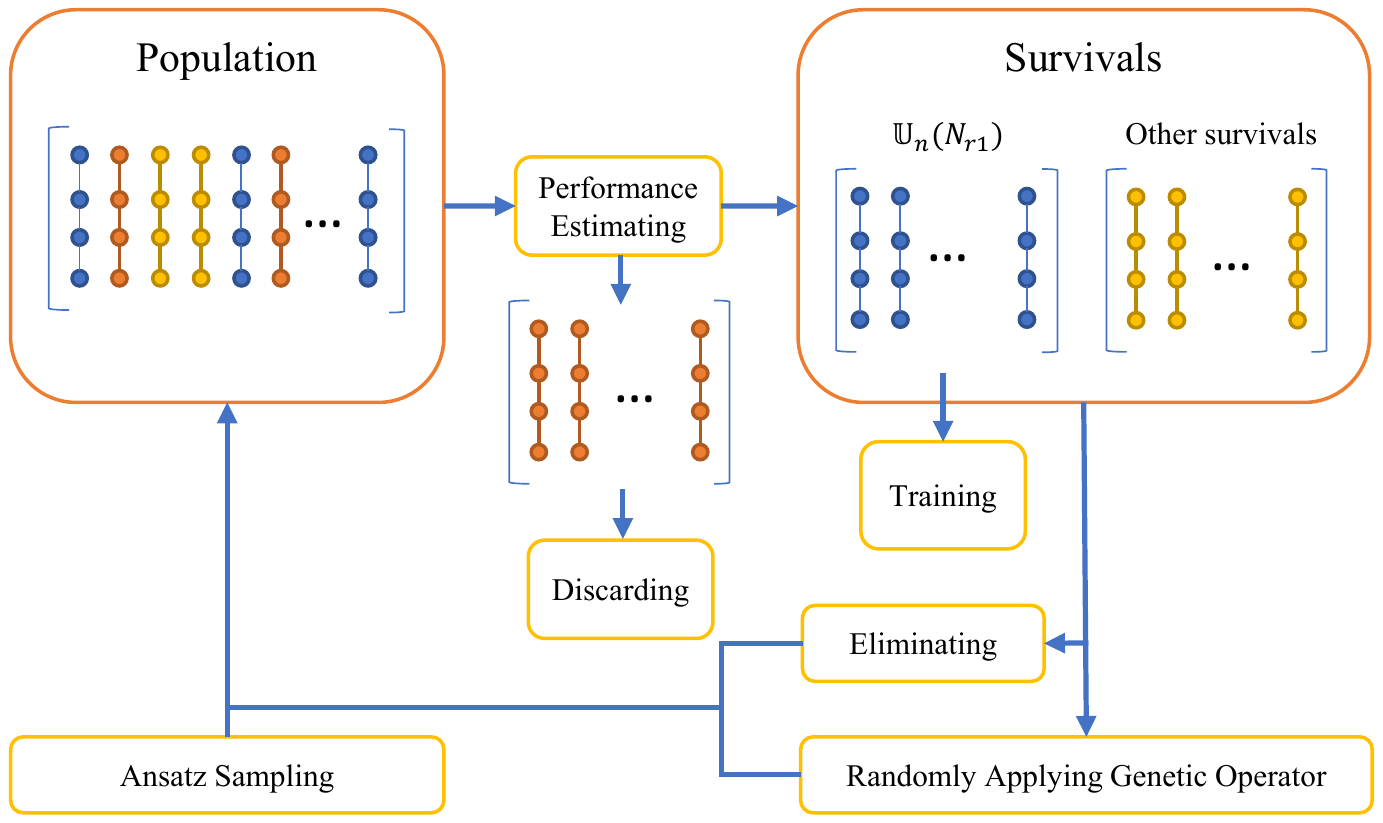}
  \caption{\label{fig:alt_training} Single generation of \emph{alternate training}.
  }
\end{figure}

As shown in Fig.~\ref{fig:alt_training}, at each generation (iteration), ansatzes in $\mathbb{U}_n(N_{r2})$ from solving Prob.~\ref{prob:multi_obj} with respect to the population are trained to update $\mathbb{P}$ in $N_{o}$ steps via gradient descent optimizer. Note that the number of ansatzes in $\mathbb{U}_n(N_{r2})$ should be less than $N_{s2}/2$. Similar to the traditional genetic algorithm NGSA-II, $N_{s2}/2$ ansatzes are survived. Specifically, the \name~finds an $N_r^\prime$ such that 
\[\left|\mathbb{U}_n(N_r^\prime)\right|<N_{s2}/2\le \left|\mathbb{U}_n(N_r^\prime+1)\right|.\]
Then, ansatzes are sequentially inserted to survivals in increasing order of the cost function value from $\mathbb{U}_n(N_r^\prime+1)\setminus \mathbb{U}_n(N_r^\prime)$. Subsequently, new ansatzes are generated by applying asexual genetic operators on survivals to fill the population. We leave the description of asexual genetic operators in Appendix \ref{appendix:genetic_operator}. Besides, we introduce the explicit elimination to the near-completely trained survivals $U$ satisfying $\alpha\frac{\left\|\nabla C(U,\bm{\theta}_U)\right\|_2}{|\bm{\theta}_U|}<\xi$, where $\alpha$ is the step size. The eliminated ansatz is recorded if it has the temporarily lowest cost function value and is erased if there exists an ansatz with a lower cost function value. Notice that the elimination is conducted after the application of genetic operators. Therefore, the sampling based on the double $\epsilon$-greedy is required to refill the population.

The \emph{alternate training} terminates at the generation $N_{i2}$ or when the record eliminated ansatz is preserved for $N_{t2}$ generations. As a result, the ansatz $U^*$ with the temporarily lowest cost function value is output as the quasi-optimal ansatz. Meanwhile, the corresponding parameters $\bm{\theta}_{U^*}$ are output for the next stage as the parameter initialization. We summarize the \emph{alternate training} in Alg.~\ref{al:alt_training}.

\subsection{VQE retraining \label{subsec:vqe_retraining}}

The final stage of \emph{VQE retraining} inherits the quasi-optimal ansatz $U^*$ and corresponding parameters $\bm{\theta}_{U^*}$ output from the \emph{alternate training} and provides a guarantee of the sufficiency of the parameter training of $\bm{\theta}_{U^*}$. As traditional VQE training does, this stage simply trains the parameters of $\bm{\theta}_{U^*}$ with the initialization $\bm{\theta}_{U^*}$ until the cost function value converges or the iteration count reaches $N_{i3}$. We summarize this stage as Alg.~\ref{al:vqe_retraining} for completeness.

\section{Numerical Simulations}\label{sec:simulations}

\begin{figure*}[t]
  \centering
	\subfigure[Ground State Energy]{
		\begin{minipage}[h]{0.3\textwidth}
		\centering
		\includegraphics[width=\textwidth]{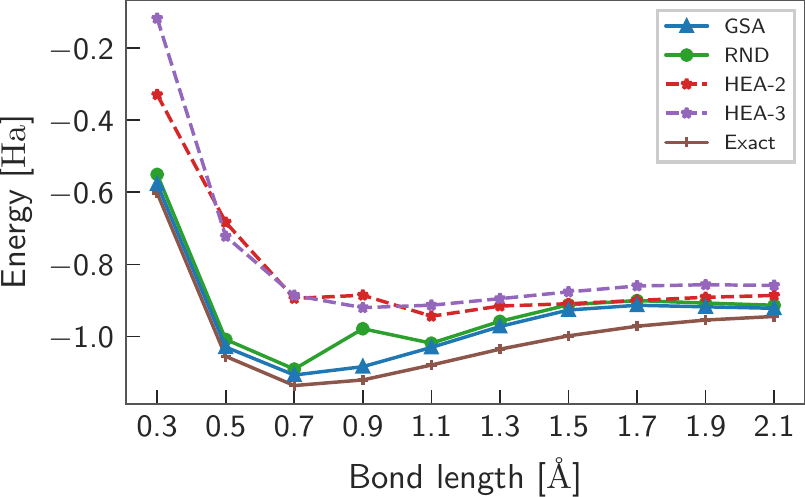}
		\end{minipage}
	}
  \subfigure[Absolute Error]{
		\begin{minipage}[h]{0.3\textwidth}
		\centering
		\includegraphics[width=\textwidth]{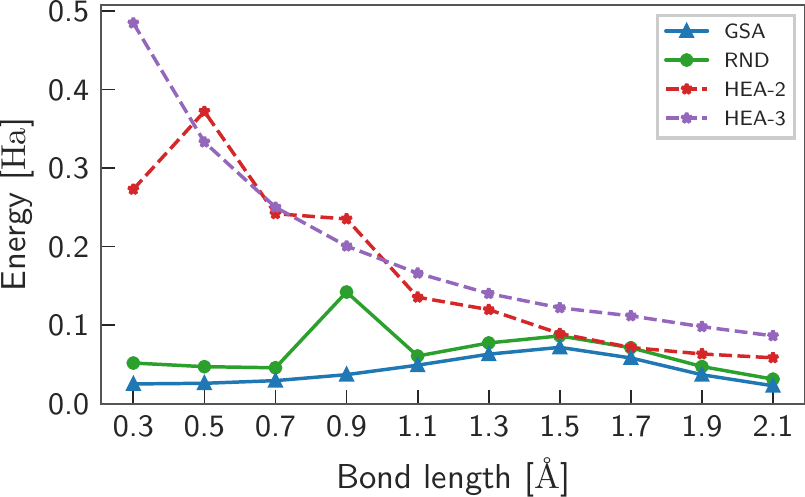}
		\end{minipage}
	}
  \subfigure[Quantum Cost]{
		\begin{minipage}[h]{0.3\textwidth}
		\centering
		\includegraphics[width=\textwidth]{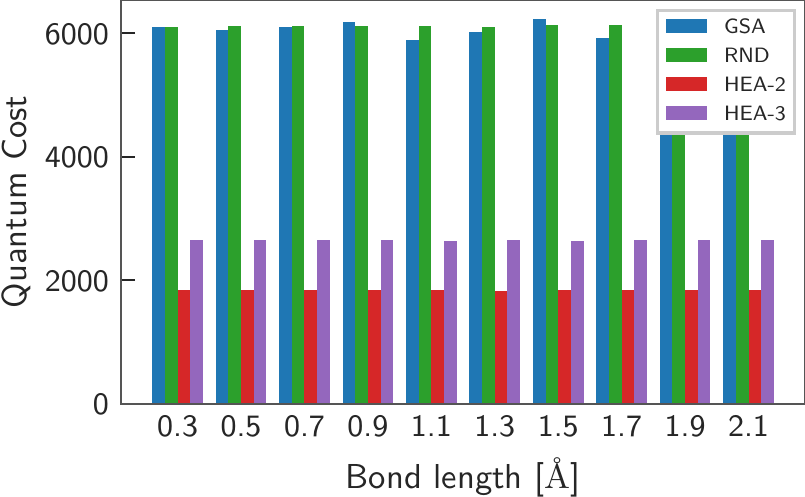}
		\end{minipage}
	}
  \caption{\label{fig:res_bond_ave} The average result of determining ground state energy of $\mathrm{H}_2$ in various bond lengths among 100 running times.}
\end{figure*}

\begin{figure*}[t]
  \centering
	\subfigure[Ground State Energy]{
		\begin{minipage}[h]{0.3\textwidth}
		\centering
		\includegraphics[width=\textwidth]{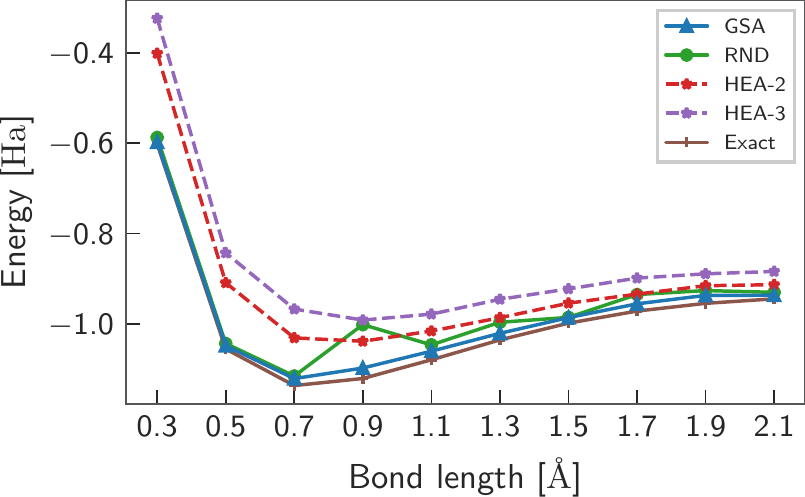}
		\end{minipage}
	}
  \subfigure[Absolute Error]{
		\begin{minipage}[h]{0.3\textwidth}
		\centering
		\includegraphics[width=\textwidth]{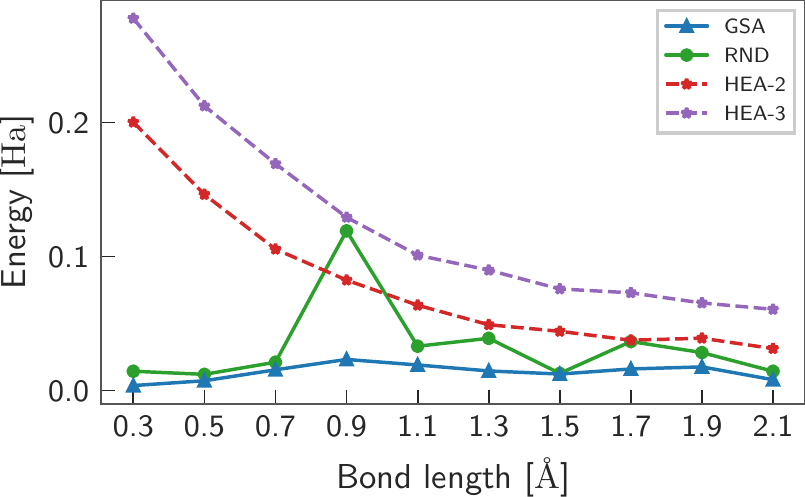}
		\end{minipage}
	}
  \subfigure[Quantum Cost]{
		\begin{minipage}[h]{0.3\textwidth}
		\centering
		\includegraphics[width=\textwidth]{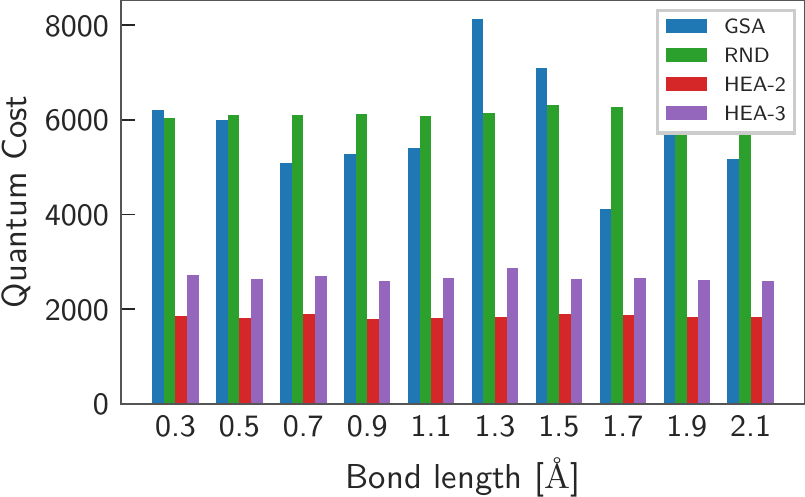}
		\end{minipage}
	}
  \caption{\label{fig:res_bond_best} The best result of determining ground state energy of $\mathrm{H}_2$ in various bond lengths among 100 running times.}
\end{figure*}

In this section, we conduct numerical simulations on VQE tasks for finding ground state energies of $\mathrm{H}_2$ and $\mathrm{H}_4$ molecules to show the improvement of the proposed framework. Moreover, to demonstrate the flexibility, we implement a modified version of the proposed framework for meta-VQE \cite{cervera-lierta2021Metavariational} learning energy profiles of parameterized Hamiltonians. 

\subsection{Finding ground state energies}

To showcase the improvement of the proposed framework, we compare it with structure fixed hardware efficient ansatz (HEA) \cite{kandala2017Hardwareefficient} and a fully randomized (RND) baseline algorithm. To make a fair comparison, we use Python with the \emph{Pennylane} package \cite{bergholm2020Pennylane} to implement both our and the two compared methods. The technical details for algorithms to be compared are elaborated as follows:

\begin{itemize}
  \item \textbf{HEA.} HEA utilizes a fixed layer pattern to construct the whole circuit. The layer construction used in this implementation is depicted in Appendix \ref{appendix:hea}. Note that this kind of ansatzes is also widely used in other implementations, e.g., \cite{meng2021Quantum,li2022Quantum};
  \item \textbf{RND.} We sample $N_{Rs}$ circuits with $N_l$ layers and unique random parameters, then process the VQE retraining and output the circuit with the minimum cost function value.
\end{itemize}

We assume that depolarization dominates the quantum error in the quantum hardware. Then, the noise operator (depolarizing channel) is 
\begin{equation*}
  \label{eq:depolarizing}
  \begin{gathered}
    \Phi(p,\sigma)=\sum_{i=0}^{3}K_i(p) \sigma K_i^\dagger(p),\\
    K_0 = \sqrt{1-p}
      \left[
        \begin{matrix}
          1&0\\
          0&1
        \end{matrix}
      \right], K_1 = \sqrt{\frac{p}{3}}
      \left[
        \begin{matrix}
          0&1\\
          1&0
        \end{matrix}
      \right],
    K_2 = \sqrt{\frac{p}{3}}
      \left[
        \begin{matrix}
          0&-i\\
          i&0
        \end{matrix}
      \right],
      K_3 = \sqrt{\frac{p}{3}}
      \left[
        \begin{matrix}
          1&0\\
          0&-1
        \end{matrix}
      \right],
  \end{gathered}
\end{equation*}
where $p\in[0,1]$ is the depolarization probability and $\sigma$ is a density matrix of a single qubit \cite{bergholm2020Pennylane}. In our configuration, a depolarizing channel with depolarization probability $p=0.001$ is applied after a single-qubit gate. We simultaneously apply two depolarizing channels with depolarization probability $p=0.01$ after a CNOT, i.e., apply one on the control qubit, and one on the target qubit.

\begin{figure}[t]
  \centering
	\includegraphics[width=0.45\textwidth]{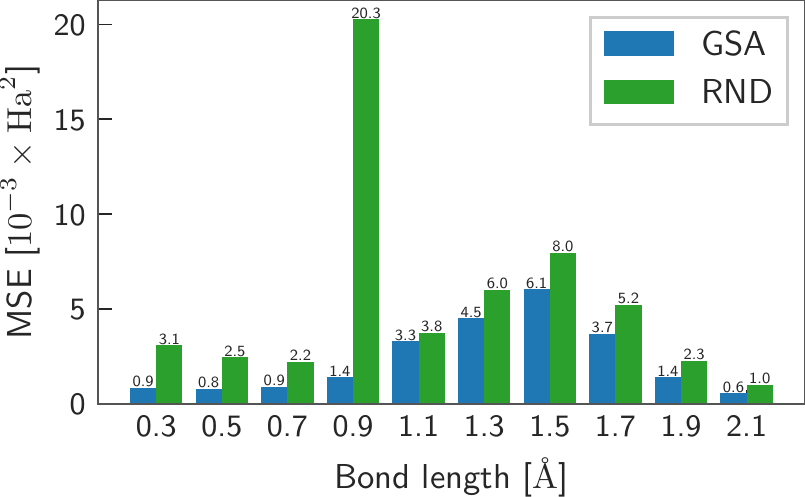}
  \caption{\label{fig:mse} Mean square error among 100 running times.
  }
\end{figure}

The criteria used for all methods are absolute error between the obtained and exact optimal cost function values, and the invoking times for the calculation of the cost function (termed {\emph{quantum cost}} henceforth), respectively. For VA-VQE, we analyze the distribution of performances of output ansatzes and calculate the mean square error (MSE) mathematically represented by
\begin{equation}
  MSE = \frac{1}{M}\sum_{i=1}^M(\hat{C}^*_i-C_{exact}^*)^2
\end{equation}
to indicate the stability to obtain the quasi-optimal cost function values, where $M$ is the total running time, $\hat{C}^*$ is the output quasi-optimal cost function value and $C_{exact}^*$ is the exact optimal cost function value.

\begin{figure}[t]
  \centering
	\subfigure[]{
		\begin{minipage}[h]{0.45\textwidth}
		\centering
		\includegraphics[width=\textwidth]{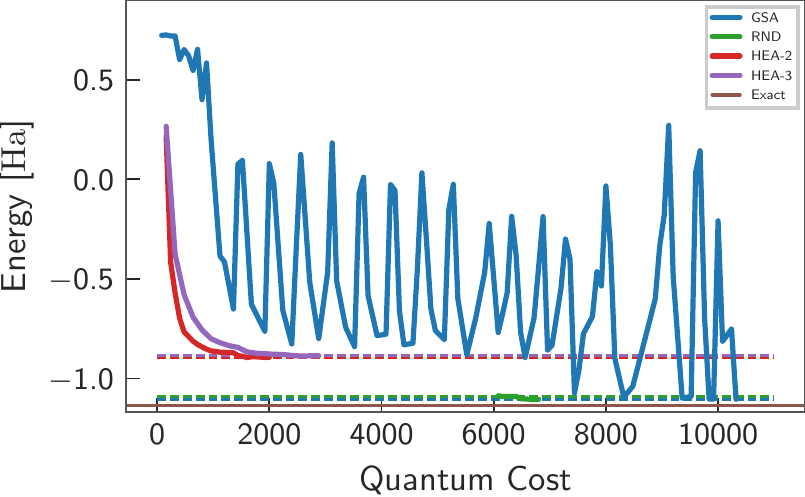}
		\end{minipage}
	}
  \subfigure[]{
		\begin{minipage}[h]{0.45\textwidth}
		\centering
		\includegraphics[width=\textwidth]{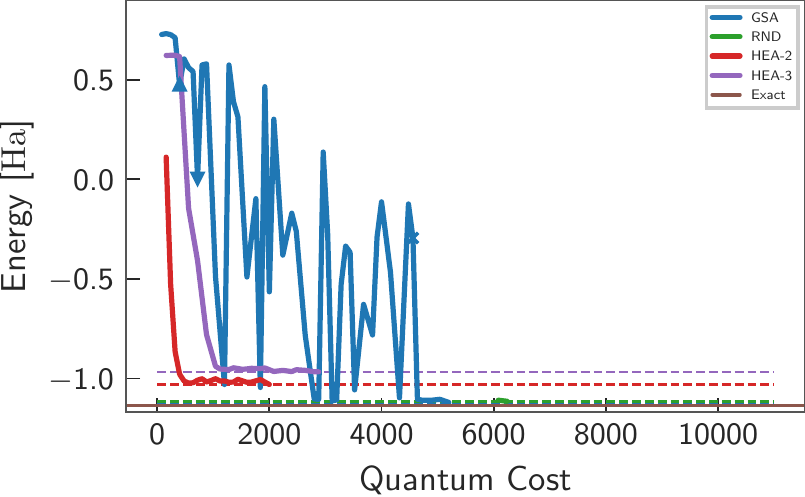}
		\end{minipage}
	}
  \caption{\label{fig:H2_07_detail_ave} Detail of cost function values with respect to quantum cost in solving the ground state energy of $\mathrm{H}_2$ in bond length $0.7$\AA~among 100 running times. (a) The average cost function values with respect to quantum cost. (b) The cost function values among 100 running times with respect to quantum cost in the running time which output the ansatz with the lowest cost function value among 100 running times. The $\blacktriangle$, $\blacktriangledown$, and $\times$ represent the termination of \emph{pool training}'s prethermalization, \emph{pool training}'s main process, and \emph{alternate training}, respectively.}
\end{figure}

\begin{figure*}[t]
  \centering
	\subfigure[$\mathrm{H}_2$ in bond length $0.7$\AA]{
    \label{subfig:sca_H2}
		\begin{minipage}[h]{0.45\textwidth}
		\centering
		\includegraphics[width=\textwidth]{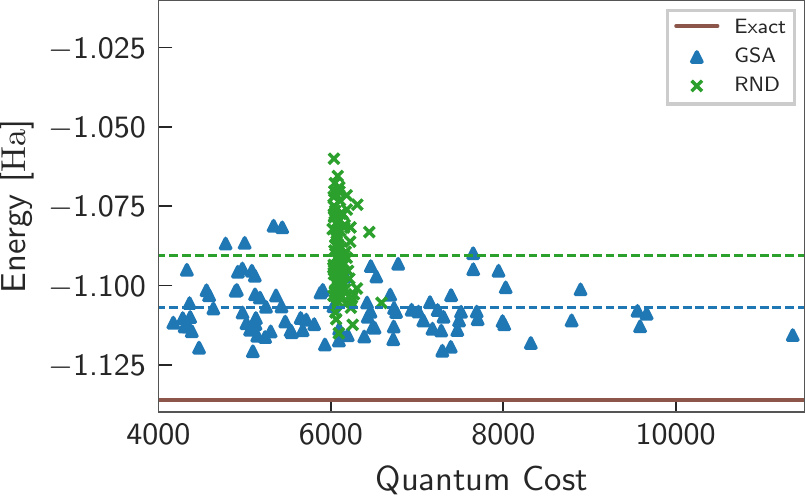}
		\end{minipage}
	}
  \subfigure[$\mathrm{H}_4$ in bond length $0.7$\AA]{
    \label{subfig:sca_H4}
		\begin{minipage}[h]{0.45\textwidth}
		\centering
		\includegraphics[width=\textwidth]{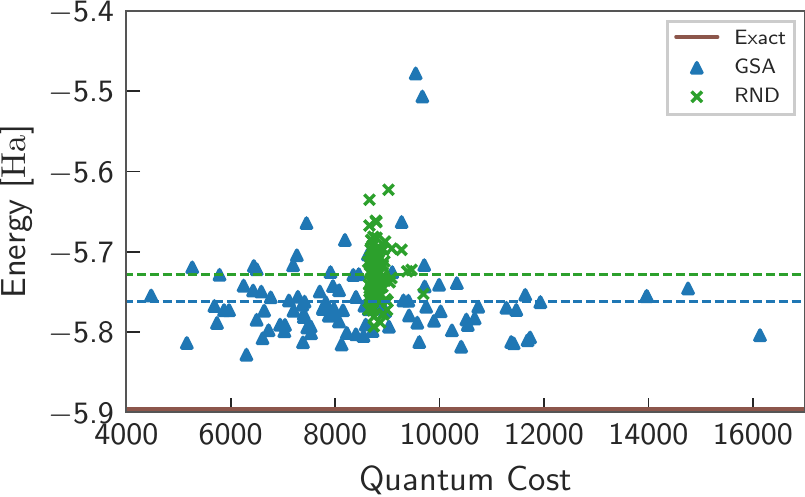}
		\end{minipage}
	}
  \caption{\label{fig:sca} Output quasi-optimal cost function values with respect to the quantum cost of 100 running times in solving the ground state energy. Dashed lines present cost function values on average.}
\end{figure*}

We first conduct numerical simulations on determining the ground state energies of Hamiltonians of $\mathrm{H}_2$ in various bond lengths. The quantum device is assumed to have 4 qubits with {ring} connectives and depolarization error being imposed by the classical simulator. For the proposed framework, we empirical set $N_l=3$, $\alpha_0=5$, $\xi =0.003$, $\epsilon_1=\epsilon_2=0.8$, $N_{s1} = N_{s2} = 16$, $N_{r1}=1$, $N_{r2}=2$, $N_{i0}=N_{i1}=2$, $N_{i2}=100$, $N_{i3} = 10$, $N_{t1} = 1$, $N_{t2} = 4$, and $N_o=5$; for HEA, we exploit 2-layer HEA as HEA-2 and 3-layer HEA as HEA-3; for RND, we set $N_l=3$ and $N_{Rs} = 6000$ such that the quantum costs of \name~and RND are approximately equal.

For each bond length, we conduct the \name~and comparison methods $100$ times, respectively. The results on average are shown in Fig.~\ref{fig:res_bond_ave}. It can be found that our algorithm can determine ground state energy with consistently lower absolute error on average than the compared algorithms in terms of the absolute error in various bond lengths, and with lower quantum cost compared to RND. This conclusion holds while considering the ansatzes with the lowest cost function values among the $100$ running times as depicted in Fig.~\ref{fig:res_bond_best}. Moreover, \name~exhibits outstanding stability as shown in Fig.~\ref{fig:mse}. Remarkably, VA-VQE methods performs better in terms of absolute error at a cost of quantum cost than HEA in most bond lengths.

For bond length $0.7$\AA, we also record the quantum cost and absolute error for each running time in Fig.~\ref{fig:H2_07_detail_ave} and Fig.~\ref{subfig:sca_H2} when the algorithm converges to the final output result. It can be concluded that our algorithm can obtain the best solution on average with a rather small quantum cost (about 6107.49) which is conspicuously better than HEA (up to 87.9\% improvement in terms of error) and RND (up to 36.0\% and 58.7\% improvement in terms of error and stability, respectively). Remarkably, \name~can asymptotically reach the best ansatz among the training, which means that a temporarily quasi-optimal ansatz can be output at any generation of \emph{alternate training}. However, our framework shows large variance of the quantum cost. This may result from the termination conditions of the three stages. It can be concluded that the termination conditions prevent the redundant iterations which can not significantly benefit the training, and thus the performance of our framework may not be sensitive to hyperparameters $N_{i0}$, $N_{i1}$, $N_{i2}$, and $N_{i3}$ when they are sufficiently large.

\begin{table}[htp]
  \begin{center}
    \caption{The result of determining the ground state energy of $\mathrm{H}_4$ in bond length $0.7$\AA. Data outside and inside brackets represent the average and the best values among 100 running times, respectively.}
    \label{tab:res_H4}
    \begin{tabular}{c| c c c c}
    &\name& RND& HEA-2& HEA-3\\
    \hline
    energy [$\mathrm{Ha}$]& -5.762(-5.828)& -5.728(-5.793)& -5.350(-5.487)& -5.230(-5.439)\\
    absolute error [$\mathrm{Ha}$]& 0.134(0.068)& 0.167(0.103)& 0.546(0.408)&0.666(0.457)\\
    error improvement& -& 20.1\%(34.1\%)& 75.5\%(83.4\%)& 79.9\%(85.2\%)\\
    quantum cost& 8579(6301)& 8811(8741)& 2665(2695)& 3865(3887)\\
    stability &0.020& 0.029& -& -\\
    \end{tabular}
  \end{center}
  \end{table}


We also conduct 6-qubit simulations for $\mathrm{H}_4$ in bond length $0.7$\AA. Configurations of \name~and HEA are set as them for $\mathrm{H}_2$. As for RND, we set $N_{Rs} = 8600$. Results of the simulations are shown in Tab.~\ref{tab:res_H4} and Fig.~\ref{subfig:sca_H4}. Remarkably, the proposed method still shows significant improvements in terms of absolute error by up to 75.5\% on average compared to HEA and both the absolute error and stability by up to 20.1\% and 29.5\%, respectively, compared to RND.

\subsection{Learning energy profiles}

To showcase the flexibility, we implement a modified version of the proposed framework for Meta-VQE \cite{cervera-lierta2021Metavariational}. Let the parameterized Hamiltonian be $H(\Delta)$. Then, sample $M$ available values of $\Delta$ denoted by $\Delta_1$, $\Delta_2$, $\hdots$, $\Delta_M$ as training bonds. Hence, the cost function for training Meta-VQE is defined by
\begin{equation}
  C_{\mathrm{Meta}}(\bm{\theta}) = \sum_{i=1}^M {\rm Tr}\left[H(\Delta_i) U(\bm{f}(\Delta_i;\bm{\theta})) \rho_i U^\dagger(\bm{f}(\Delta_i;\bm{\theta}))\right],
\end{equation}
where $\bm{f}(\Delta_i;\cdot)$ represents the encoding function of the parameter of the Hamiltonian with respect to operand trainable parameters. After parameter training, the landscape of
\begin{equation}
  V_{\mathrm{Meta}}(\Delta)={\rm Tr}\left[H(\Delta) U(\bm{f}(\Delta_i;\bm{\theta})) \rho_i U^\dagger(\bm{f}(\Delta;\bm{\theta}))\right]
\end{equation}
exhibits the energy profile of the parameterized Hamiltonian.

We implement the Meta-VQE adapted \name~by simply identifying each state $\astate$ by both the structure and the encoding function of each single-qubit gate in the structure to enable the auto-decision of the encoding methods. Here we provide 3 encoding functions for parameters of single-qubit gates. They are, respectively, 
\begin{align}
  &f_1(\Delta;\theta)=\theta,\\
  &f_2(\Delta;\theta, \gamma) = \theta\Delta+\gamma,\\
  &f_3(\Delta;\theta, \gamma) = \theta e^\Delta+\gamma.
\end{align}

We compare our framework with the HEA implementation. Each HEA consists of $N_{le}$ encoding layers and $N_{lp}$ processing layers and is denoted by HEA-$N_{le}$-$N_{lp}$. Specifically, $f_2$ and $f_1$ are applied on each single-qubit gate in encoding layers and processing layers, respectively.

\begin{figure*}[t]
  \centering
	\subfigure[Energy Profile]{
		\begin{minipage}[h]{0.45\textwidth}
		\centering
		\includegraphics[width=\textwidth]{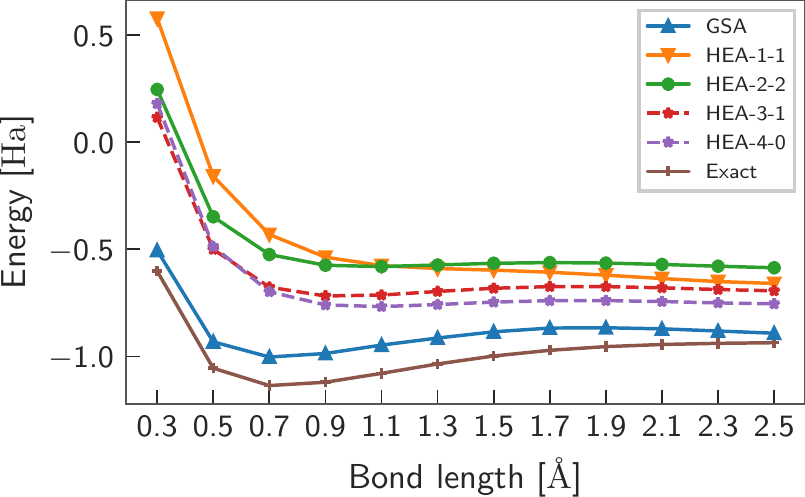}
		\end{minipage}
	}
  \subfigure[Absolute Error]{
		\begin{minipage}[h]{0.45\textwidth}
		\centering
		\includegraphics[width=\textwidth]{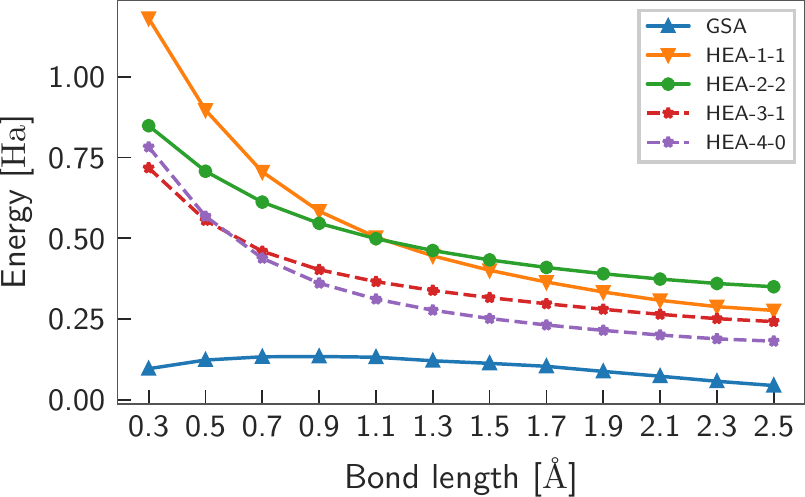}
		\end{minipage}
	}
  \caption{\label{fig:res_meta} The result of determining the energy profile of $\mathrm{H}_2$.}
\end{figure*}

\begin{table}[htp]
  \begin{center}
    \caption{The variance of absolute error and corresponding improvement.}
    \label{tab:meta_variance}
    \begin{tabular}{c| c c c c c}
    &\name& HEA-1-1& HEA-2-2& HEA-3-1& HEA-4-0\\
    \hline
    variance &0.001& 0.078& 0.024& 0.020& 0.033\\
    improvement & -&98.8\%& 96.1\%& 95.4\% & 97.2\%
    \end{tabular}
  \end{center}
  \end{table}

Simulations are conducted for $\mathrm{H}_2$, where $\Delta$ in $H(\Delta)$ represents the bond length. Training bonds are set as 0.5, 0.9, 1.3, 1.7, 2.1, 2.5 and 2.9. We set $N_l=4$ and other configurations as that for the VQE task of $\mathrm{H}_2$. Results are shown in Fig.~\ref{fig:res_meta} and Tab.~\ref{tab:meta_variance}. It can be concluded that our framework learns the energy profile with the conspicuously lower variance of absolute error, which indicates more information on chemical properties.

\section{Conclusion}\label{sec:conclusion}

In this paper, addressing the issues in mitigating the BP phenomenon in VQE, we propose a gradient-sensitive alternate framework for VQE with the variable ansatz strategy. We propose a theoretical framework that highlights the magnitude of gradient and exploits the alternate optimization scheme so that the local optimum can be avoided from the ansatz perspective. It can be theoretically proved that the result of the proposed theoretical framework is a subset of the results of the original VA-VQE. 

Then, based on the theoretical framework, a novel implementation named \name~is proposed with three stages. We reduce the size of the search space of the ansatz via applying gate commutation rules and establishing a bijection between the search space and the practical implementations of ansatzes to boost the time efficiency of the optimal ansatz and parameter determination. Exploiting the double $\epsilon$-greedy strategy based on the candidate tree, an initialization of parameters is determined, so that the training competitions are mitigated and the local optimal can be evaded from the parameter perspective. Based on the initialization of parameters, the \name~follows the framework with an elaborately modified genetic algorithm to find and train a quasi-optimal ansatz. 

Finally, we conduct numerical simulations on quantum chemistry to find the ground state energy of a quantum system. We adopt relatively fair criteria for measuring the performance of VA-VQE so that the transverse comparison among methods of VA-VQE can be clearly conducted. As a result, the \name~shows conspicuously better performance on average compared to the structure fixed HEA up to 87.9\% improvement in terms of absolute error. Furthermore, compared to the full-randomized RND, our framework obtains up to 36.0\% and 58.7\% improvement in terms of error and stability, respectively, with similar quantum costs. Moreover, we implement a modified version of our proposed framework for meta-VQE learning the energy profile of parameterized Hamiltonians to show the flexibility.

Although the proposed method obtained better results, further research is required to be conducted. For example, \name~requires a large number of hyperparameters to be adjusted such as $N_l$. Research on automatic hyperparameter adjustment may be meaningful for the VA-VQE.

\section*{Acknowledgements}
This work was supported by the National Science Foundation of China (No. 61871111 and No. 61960206005) and the Fundamental Research Funds for the Central Universities (No. 2242022k30006 and No. 2242022k30001).

\bibliographystyle{quantum}
\bibliography{citations}






\onecolumn\newpage
\appendix

\section{Examples}\label{appendix:example}

\begin{example}[A VQE Task]\label{eg:vqe}
  The VQE is to find the ground state energy of a quantum system where the dynamic can be described by a Hamiltonian $H$. Setting the observable $O=H$, the function $f(x)=x$, the input state $\rho = \left|0\right>\left<0\right|$ ,and the ansatz $U(\bm{\theta})$ as the hardware efficient ansatz (c.f. Appendix \ref{appendix:hea}), where $\bm{\theta} \in \left(-\pi,\pi\right]^{d}$ and $d$ is the number of parameters, the problem is 
  \begin{equation}
   \min_{\bm{\theta}\in \left(-\pi,\pi\right]^{d}} C(\bm{\theta}) = {\rm Tr}\left[H U(\bm{\theta}) \left|0\right>\left<0\right| U^\dagger(\bm{\theta})\right].
  \end{equation}
\end{example}

\begin{example}[Not Bijection]\label{eg:not_bijection}
  For a $2$-qubit quantum system, $N_l = 2$, define $\astate_1$ and $\astate_2$ as $(I^{q_1},R_y^{q_2})$ and $(R_y^{q_1}, I^{q_2})$, respectively. Then, the practical implementations of two ansatzes described by paths
  \begin{align*}
    \apath_1 &= (\anode^0, \anode_1^1:=(\anode^0,\astate_1),\anode_1^2:=(\anode^1,\astate_2)),\\
    \apath_2 &= (\anode^0, \anode_2^1:=(\anode^0,\astate_2),\anode_2^2:=(\anode^1,\astate_1))
  \end{align*}
  are identical.
\end{example}

\begin{example}[Mergeable \& Deletable]\label{eg:mnd}
  For a $4$-qubit quantum system, $N_l = 2$, define 
  \begin{align*}
    \astate_1&:=(I^{q_1},R_y^{q_2}, \mathrm{CNOT}^{q_3,q_4}),\\
    \astate_2&:=(R_z^{q_1}, R_y^{q_2},\mathrm{CNOT}^{q_3,q_4}),\\
    \astate_3&:=(R_z^{q_1}, R_y^{q_2}).
  \end{align*}
  Then, the practical implementations of two ansatzes described by paths
  \begin{align*}
    \apath_1 &= (\anode^0, \anode_1^1:=(\anode^0,\astate_1),\anode_1^2:=(\anode^1,\astate_2)),\\
    \apath_2 &= (\anode^0, \anode_2^1:=(\anode^0,\astate_3),\anode_2^2:=(\anode^1,(~))),
  \end{align*}
  are equivalent.
\end{example}

\begin{example}[Size of Search Space]\label{eg:space_size}
  The number of possible paths in a $4$-qubit quantum system under the setting as described in Sec.~\ref{subsec:hardware_constraints} with and without the gates commutation rules $N_{w/}$ and $N_{w/o}$ with respect to $N_l$ are demonstrated in Tab.\ref{tab:space_size}
  \begin{table}[htp]
    \begin{center}
      \caption{The number of possible paths with respect to $N_l$.}
      \label{tab:space_size}
      \begin{tabular}{c c c c}
      $N_l$& $N_{w/}$& $N_{w/o}$& $\frac{N_{w/}}{N_{w/o}}$\\
      \hline
      1 & 56& 567& 9.88\%\\
      2 & 11768& 321489& 3.66\%\\
      3 & 2859977& 182284263& 1.57\%\\
      \end{tabular}
    \end{center}
    \end{table}
\end{example}

\section{Hardware efficient ansatz} \label{appendix:hea}
The hardware efficient ansatz (HEA) was first proposed in \cite{kandala2017Hardwareefficient} to conduct VQE for small molecules and quantum magnets. It is constructed in a layer-by-layer fashion in which each layer has an identical structure. Specifically, a layer is organized by native single-qubit gates on all qubits and naturally available entangling interactions. In our setting, based on the native gates as described in Sec.~\ref{subsec:hardware_constraints}, we exploit the HEA with layers as depicted in Fig.~\ref{fig:hea}.

\begin{figure}[ht]
  \centering
	\includegraphics[width=0.45\textwidth]{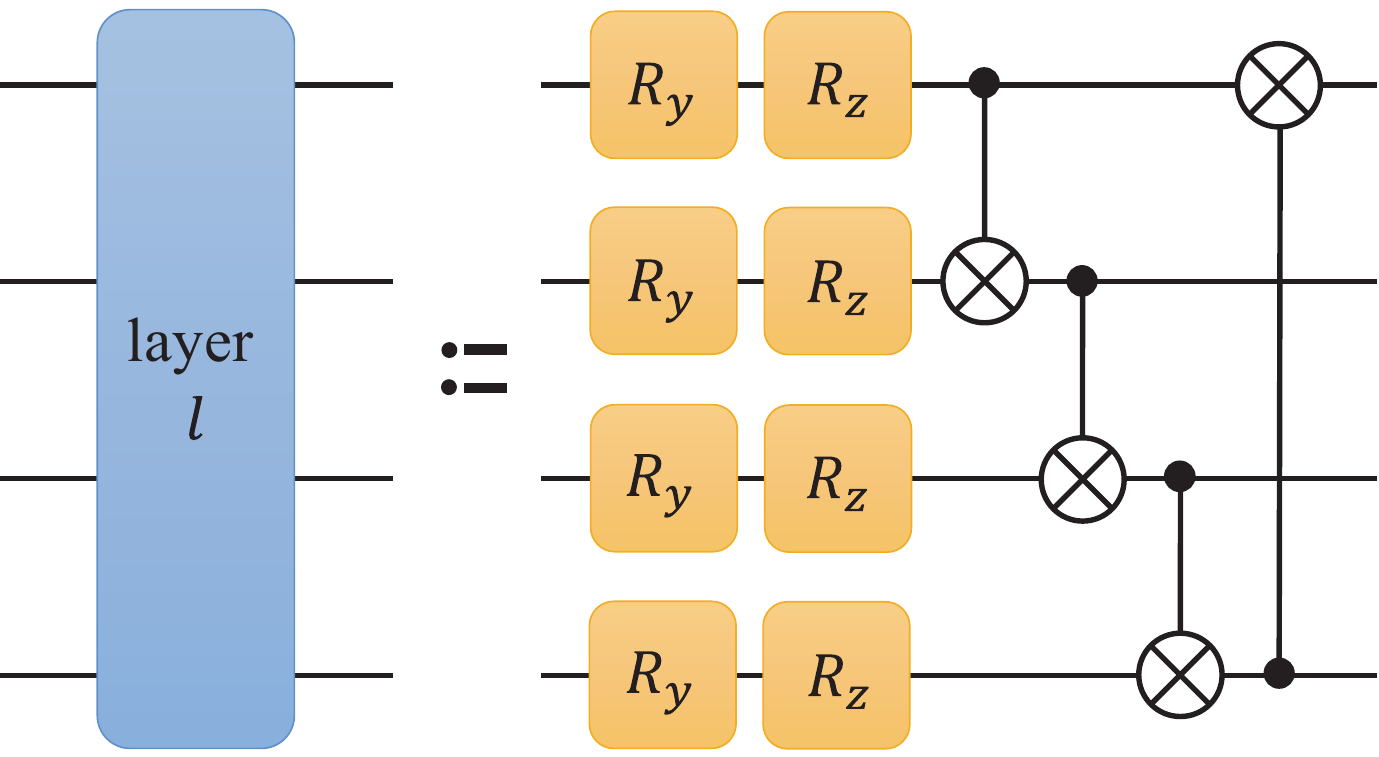}
  \caption{\label{fig:hea}{Layer construction used in HEA.}}
\end{figure}

\section{Proof of Theorem \ref{th:convergence}}\label{appendix:proof_convergence}

Solution of Prob.~\ref{prob:vqe} is a set of parameters 
\begin{equation}
  \Theta^*=\{\bm{\theta}^*|C(\bm{\theta}^*) = \min_{\bm{\theta}\in \mathbb{D}^d}C(\bm{\theta})\}.
\end{equation}

\begin{equation}
  \left\{(U^*,\Theta^*)\left|
  \begin{gathered}C(U^*,\bm{\theta}^*) = \min_{U,\bm{\theta}}C(U,\bm{\theta}),\\ \forall \bm{\theta}^*\in\Theta^*
  \end{gathered}
  \right.\right\}.
\end{equation}

Given an arbitrary valid $\mathbb{P}$, we divide $\mathbb{S}$ into four disjoint sets
\begin{equation}
  \mathbb{U}_+^*\equiv\left\{U\left|C(U,\bm{\theta}_U) = \min_{U,\bm{\theta}}C(U,\bm{\theta})\right.\right\},
\end{equation}
\begin{equation}
  \mathbb{U}_+\equiv\left\{U\left|
    \exists \bm{\theta}^\prime,C(U,\bm{\theta}^\prime) = \min_{U,\bm{\theta}}C(U,\bm{\theta}),~
    C(U,\bm{\theta}_U) > \min_{\bm{\theta}}C(U,\bm{\theta})
    \right.\right\},
\end{equation}
\begin{equation}
  \mathbb{U}_-^*\equiv\left\{U\left|
    \forall \bm{\theta}^\prime,C(U,\bm{\theta}^\prime) > \min_{U,\bm{\theta}}C(U,\bm{\theta}),~
    C(U,\bm{\theta}_U) = \min_{\bm{\theta}}C(U,\bm{\theta})
    \right.\right\},
\end{equation}
\begin{equation}
  \mathbb{U}_-\equiv\left\{U\left|
    \forall \bm{\theta}^\prime,C(U,\bm{\theta}^\prime) > \min_{U,\bm{\theta}}C(U,\bm{\theta}),~
    C(U,\bm{\theta}_U) > \min_{\bm{\theta}}C(U,\bm{\theta})
    \right.\right\}.
\end{equation}
Note that $\mathbb{U}_+^* \cup \mathbb{U}_+\cup\mathbb{U}_-^*\cup\mathbb{U}_- = \mathbb{S}$. We prove the theorem by proving 
\begin{align}
  \lim_{t\to\infty}\mathbb{U}_n \cap \left(\mathbb{U}_-^*\cup\mathbb{U}_-\right)=\emptyset,\label{eq:cond_1}\\
  \lim_{t\to\infty}\mathbb{U}_n =\lim_{t\to\infty} \mathbb{U}_+^*,\label{eq:cond_2}
\end{align}
where $t$ indicates the iteration of the alternate optimization loop.

\begin{lemma}\label{lemma:finally_trained}
  $U\in\lim_{t\to\infty}\mathbb{U}_+^*\cup\mathbb{U}_-^*$, $\forall U\in \mathbb{U}_+\cup\mathbb{U}_-$ such that $\frac{\left\|\nabla C(U,\bm{\theta}_U)\right\|_2}{|\bm{\theta}_U|}>0$.
\end{lemma}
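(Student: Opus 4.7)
The plan is to track the finite set
\[
A_t := \left\{V \in \mathbb{U}_+\cup\mathbb{U}_- : \frac{\|\nabla C(V,\bm{\theta}_V)\|_2}{|\bm{\theta}_V|} > 0\right\}
\]
at iteration $t$ of Alg.~\ref{al:alt_opt}, and to show that whenever $A_t \neq \emptyset$ the next iteration strictly decreases $|A_t|$. Since $\mathbb{S} = \mathbb{G}^{N_g}$ is finite, this forces $A_t=\emptyset$ for some finite $t$, which is exactly the lemma: every $U$ satisfying the hypothesis lies in $A_0$ and therefore ends up in $\mathbb{U}_+^{*}\cup\mathbb{U}_-^{*}$ after finitely many iterations.

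The key structural fact I will establish first is a \emph{dominance closure} property: if $V$ dominates some $U\in A_t$ in the sense of Eqs.~(\ref{eq:cost_ineq})--(\ref{eq:grad_ineq}), then $V \in A_t$. The argument is short. By the gradient inequality, $\|\nabla C(V,\bm{\theta}_V)\|_2/|\bm{\theta}_V| \ge \|\nabla C(U,\bm{\theta}_U)\|_2/|\bm{\theta}_U| > 0$, so $\bm{\theta}_V$ is not a critical point of $C(V,\cdot)$. Both $\mathbb{U}_+^{*}$ and $\mathbb{U}_-^{*}$ require $C(V,\bm{\theta}_V)=\min_{\bm{\theta}}C(V,\bm{\theta})$, hence a vanishing gradient; therefore $V\notin\mathbb{U}_+^{*}\cup\mathbb{U}_-^{*}$, giving $V\in\mathbb{U}_+\cup\mathbb{U}_-$ and thus $V\in A_t$.

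With closure in hand I take $W := \arg\min_{V\in A_t} C(V,\bm{\theta}_V)$ and show $W \in \mathbb{U}_n$. Any hypothetical dominator $V'$ would satisfy $C(V',\bm{\theta}_{V'}) < C(W,\bm{\theta}_W)$ and, by closure, lie in $A_t$, contradicting the minimality of $W$. Hence $W \in \mathbb{U}_n$, and Alg.~\ref{al:alt_opt} updates its parameters by solving Prob.~\ref{prob:vqe} exactly, yielding $C(W,\bm{\theta}_W)=\min_{\bm{\theta}}C(W,\bm{\theta})$ after the step. This places $W$ into $\mathbb{U}_+^{*}\cup\mathbb{U}_-^{*}$, so $W\notin A_{t+1}$. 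Because the memberships of every other $V$ in $\mathbb{U}_+$ or $\mathbb{U}_-$ depend only on its own $\bm{\theta}_V$ (untouched this iteration) and on the fixed quantity $\min_{U,\bm{\theta}}C(U,\bm{\theta})$, no ansatz newly enters $A_{t+1}$, so $|A_{t+1}| \le |A_t| - 1$.

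The main obstacle I anticipate is less the chain of deductions above than being honest about what is \emph{not} being assumed. First, I rely on Prob.~\ref{prob:vqe} being solved to the actual minimum of $C(U,\cdot)$ at each update; this is the idealized reading of Alg.~\ref{al:alt_opt} inherited from the theoretical framework and should be carried as a standing hypothesis, since gradient descent with the Wolfe line search in Sec.~\ref{subsec:gradient_descent} need not globally minimize a non-convex landscape in general. Second, when $\mathbb{U}_n$ contains several ansatzes in one iteration and several updates happen in parallel, I must confirm that these parallel updates cannot corrupt the status of the other members of $A_t$; this follows because each update modifies only the parameters of its own ansatz while the labels $\mathbb{U}_+,\mathbb{U}_-,\mathbb{U}_+^{*},\mathbb{U}_-^{*}$ are defined ansatz-locally. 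Together these two observations justify the strict decrease of $|A_t|$ and hence the conclusion of the lemma.
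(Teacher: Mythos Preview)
Your argument is correct and follows the same overall strategy as the paper: show that at every iteration at least one element of the ``untrained'' set $A_t$ lands in $\mathbb{U}_n$, gets its parameters driven to a minimizer by the idealized Prob.~\ref{prob:vqe} step, and thereby leaves $A_t$; finiteness of $\mathbb{S}$ finishes. The difference is in which witness you exhibit. The paper observes directly that any ansatz attaining the \emph{maximum} normalized gradient magnitude among all of $\mathbb{S}$ cannot be dominated (nothing beats it on the second objective), so it lies in $\mathbb{U}_n$. You instead take the ansatz of \emph{minimum} cost within $A_t$ and argue it is non-dominated via your dominance-closure lemma. Both choices exploit one of the two Pareto objectives; the paper's choice is slightly shorter because the gradient maximizer is non-dominated in all of $\mathbb{S}$ without needing closure, whereas your route trades that for an extra (easy) lemma but makes the monotone decrease of $|A_t|$ and the ``no re-entry'' point more explicit. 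Your caveats about the idealized reading of the Prob.~\ref{prob:vqe} step and the ansatz-locality of the labels are appropriate and match assumptions the paper uses implicitly.
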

\begin{proof}
  By solving the Prob.~\ref{prob:multi_obj}, all ansatzes $U\in\mathbb{U}_+\cup\mathbb{U}_-$ such that 
  $$
  \frac{\left\|\nabla C(U,\bm{\theta}_U)\right\|_2}{|\bm{\theta}_U|}=\max_{\bm{\theta}}\frac{\left\|\nabla C(U,\bm{\theta})\right\|_2}{|\bm{\theta}|} >0
  $$
  are included in $\mathbb{U}_n$. For any $U\in\mathbb{U}_n$, $\bm{\theta}_U\in\mathbb{P}$ is updated by $\bm{\theta}^*$ such that $C(U,\bm{\theta}^*) = \min_{\bm{\theta}}C(U,\bm{\theta})$, which means any $U\in \mathbb{U}_n$ is an element of $U \in \mathbb{U}_+^*\cup\mathbb{U}_-^*$ in the next iteration. Therefore, any $U\in \mathbb{U}_+\cup\mathbb{U}_-$ such that $\frac{\left\|\nabla C(U,\bm{\theta}_U)\right\|_2}{|\bm{\theta}_U|}>0$ will be selected as an element in $\mathbb{U}_n$ if $t\to\infty$. Then, we have $U\in\lim_{t\to\infty}\mathbb{U}_+^*\cup\mathbb{U}_-^*$, $\forall U\in \mathbb{U}_+\cup\mathbb{U}_-$ such that $\frac{\left\|\nabla C(U,\bm{\theta}_U)\right\|_2}{|\bm{\theta}_U|}>0$.
\end{proof}

Further, it is straightforward that $U\in\lim_{t\to\infty}\mathbb{U}_+^*$, $\forall U\in \mathbb{U}_+$ such that $\frac{\left\|\nabla C(U,\bm{\theta}_U)\right\|_2}{|\bm{\theta}_U|}>0$ and $V\in\lim_{t\to\infty}\mathbb{U}_-^*$,  $\forall V\in \mathbb{U}_-$ such that $\frac{\left\|\nabla C(V,\bm{\theta}_V)\right\|_2}{|\bm{\theta}_V|}>0$. Since $C(U,\bm{\theta}_U) = \min_{\bm{\theta}}C(U,\bm{\theta})$ for all $U \in \mathbb{U}_+^*\cup\mathbb{U}_-^*$, we have that
\begin{equation}
  \frac{\left\|\nabla C(U,\bm{\theta}_U)\right\|_2}{|\bm{\theta}_U|}=0, ~\forall U\in\mathbb{U}_+^*\cup\mathbb{U}_-^*.
\end{equation}
Therefore, if $\mathbb{U}_+^*$ is not empty, then inequalities Eq.~(\ref{eq:cost_ineq}) and Eq.~(\ref{eq:grad_ineq}) are simultaneously established for all $U\in\mathbb{U}_-^*$, $V\in\mathbb{U}_+^*$, which means $\mathbb{U}_n\cap \mathbb{U}_-^* = \emptyset$. Then, we have that 
\begin{equation}
  \lim_{t\to\infty} \mathbb{U}_n\cap (\mathbb{U}_-^*\cup\mathbb{{U}_-}) = \emptyset,
\end{equation}
if $\lim_{t\to\infty} U_+^*\ne\emptyset$.

The simultaneous establishment of Eq.~(\ref{eq:exist_optimal_theta}) and Eq.~(\ref{eq:not_local_maximum}) of $V$ 
implies that $V\in\mathbb{U}_+$ with $\frac{\left\|\nabla C(V,\bm{\theta}_V)\right\|_2}{|\bm{\theta}_V|}>0$ or $V\in\mathbb{U}_+^*$, which indicates
\begin{equation}
  \lim_{t\to\infty}U_+^*\ne\emptyset.
\end{equation}

It is obvious that $\mathbb{U}_+^* \subset \mathbb{U}_n$ and $\mathbb{U}_n \subset \mathbb{U}_+^*\cup\mathbb{U}_+$. Since 
\begin{equation}
  \lim_{t\to\infty}U_+ \equiv \left\{U\left|\begin{gathered}\exists \bm{\theta}^\prime,C(U,\bm{\theta}^\prime) = \min_{U,\bm{\theta}}C(U,\bm{\theta}),\\
  \forall \delta >0,~C(U,\bm{\theta}_U\in \lim_{t\to\infty}\mathbb{P}) = \max_{\bm{\theta}\in \Theta_{U,\delta}}C(U,\bm{\theta})
  \end{gathered}
  \right.\right\},
\end{equation}
where $\Theta_{U,\delta} = \{\bm{\theta}\left|\|\bm{\theta}-\bm{\theta}_U\|_2<\delta\right.\}$, we have that $\lim_{t\to\infty}(\mathbb{U}_n\cup\mathbb{U}_+)=\emptyset$. Therefore, Eq.~(\ref{eq:cond_1}) and Eq.~(\ref{eq:cond_2}) are established.

\section{Hyperparameters}\label{appendix:hyperparameters}

\begin{table}[h]
  \begin{center}
    \caption{Hyperparameters.}
    \label{tab:hyperparameters}
    \begin{tabular}{c | c c}
      scope & name& notation\\
      \hline
      \hline
      \multirow{4}{*}{global} & maximum number of layers & $N_l$\\
      & reference step size& $\alpha_0$\\
      & convergence threshold& $\xi$\\
      & probabilities for double $\epsilon$-greedy strategy& $\epsilon_1$, $\epsilon_2$\\
      \hline
      \multirow{5}{*}{pool training}& number of sampled ansatzes& $N_{s1}$\\
      & maximum number of ranks for updating parameters& $N_{r1}$\\
      & stable threshold for terminating& $N_{t1}$\\
      & maximum iteration times in the prethermalization & $N_{i0}$\\
      & maximum iteration times in the main process &$N_{i1}$\\
      \hline
      \multirow{5}{*}{alternate training}& population size& $N_{s2}$\\
      & maximum number of ranks for updating parameters& $N_{r2}$\\
      & stable threshold for terminating& $N_{t2}$\\
      & optimization step per generation& $N_{o}$\\
      &maximum iteration times& $N_{i2}$\\
      \hline
      VQE retraining& maximum iteration times& $N_{i3}$
    \end{tabular}
  \end{center}
\end{table}

\section{Genetic Operators}\label{appendix:genetic_operator}
In this appendix, we provide genetic operators the \name~applied in detail. We only consider the asexual genetic operators for simplicity. Consequently, only one ansatz is generated from each survival ansatz. All genetic operators comply with the same structure. First, a new practical implementation of ansatz is randomly generated. Then, the implementation is refined by the gate commutation rule. Since the bijection between the search space and the practical implementations of ansatzes is established, the path representation of the new implementation is finally output. We introduce the genetic operators identified in the first step.

\paragraph{Mutation:} The mutation operator randomly selects a node $\anode$ in the survival ansatz. Then, a child node $\anode_{new}$ of the parent node of $\anode$ is randomly sampled. Therefore, a new implementation of ansatz substituting $\anode$ by $\anode_{new}$ is generated.

\paragraph{Deletion:} The deletion operator randomly deletes a node in the survival ansatz and connects its parent node and child node. Then, a new implementation is generated.

\paragraph{Amplification:} The amplification operator randomly selects a node $\anode$ in the survival ansatz. Then, a child node of $\anode$ is inserted into the ansatz, and therefore a new implementation is generated.

\section{Pseudocodes for Algorithms} \label{appendix:alg}
In this appendix, we list the pseudocodes for algorithms. Note that the task $\mathbb{T}$ is generally omitted as the input parameters of algorithms without confusion.

\begin{algorithm}[ht]
  \caption{\label{al:alt_opt}alternate VA-VQE}
  \KwIn{$\mathbb{S}$, $\mathbb{P}_0$}
  \SetKwFunction{PO}{ParameterOptimization}
  \Begin{
    $\mathbb{P}\leftarrow \mathbb{P}_0$;\tcp*[f]{Initialize $\mathbb{P}$}\\ 
    $\mathbb{U}_n \leftarrow \left\{\,\right\}$;\tcp*[f]{Initialize $\mathbb{U}_n$}\\
    \While{$\mathbb{U}_n$ and $\mathbb{P}$ are not converged}{
      $\mathbb{U}_n \leftarrow \mathbf{SolveProb\ref{prob:multi_obj}}(\mathbb{S},\mathbb{P})$;\\
      \ForEach{$U \in \mathbb{U}_n$}{
        $\bm{\theta}_U\in\mathbb{P}\leftarrow\mathbf{SolveProb\ref{prob:vqe}}(U)$;
      }
    }
  }
  \Return{$\mathbb{U}_n$, $\mathbb{P}$.}
\end{algorithm}

\begin{algorithm}[ht]
  \caption{\label{al:gsa}\name}
  \KwIn{$\mathbb{G}$}
  \SetKwFunction{PO}{ParameterOptimization}
  \Begin{
    $\mathbb{S}(N_l)\leftarrow$ Construct search space by $\mathbb{G}$;\\
    $\mathbb{P}_0,T \leftarrow\mathbf{Pool Training}(\mathbb{S}(N_l))$;\\
    $U^*, \bm{\theta}_{U^*} \leftarrow\mathbf{AlternateTraining}(\mathbb{S}(N_l),\mathbb{P}_0,T)$;\\
    $U^*, \bm{\theta}^* \leftarrow\mathbf{VQE Training}(U^*,\bm{\theta}_{U^*})$;
  }
  \Return{$U^*$, $\bm{\theta}^*$.}
\end{algorithm}

\begin{algorithm}[ht]
  \caption{\label{al:sample}USampling}
  \KwIn{$\mathbb{S}(N_l)$, $T$, $\epsilon_1$, $\epsilon_2$}
  \Begin{
    $J_1 \sim \left\{(True, \epsilon_1),(False, 1-\epsilon_1)\right\};$\\
    \eIf{$J_1$ is $True$}{
      $\anode\leftarrow\anode^0\in T$;\\
      $U\leftarrow (\anode)$;\\
      $\eta \sim \left\{(1, \epsilon_2),(0, 1-\epsilon_2)\right\}$;\\
      \For{$\anode$ is not a leaf node}{
        $\mathbb{V}_c \leftarrow \mathbf{GetChildNodes(\anode)}$;\\
        $\anode^\prime\sim \left\{(\anode^\prime_k\in\mathbb{V}_c, {\rm Pr}(\anode^\prime_k;\eta))\right\}$;\\
        $U \leftarrow U \oplus (\anode^\prime)$;\\
        $\anode\leftarrow\anode^\prime$;
      }
    }{
      $U \sim \mathbb{S}(N_l)$;\\
    }
  }
  \Return{$U$.}
\end{algorithm}

\begin{algorithm}[ht]
  \caption{\label{al:main_pool_training}PoolTrainingMainProcess}
  \KwIn{$\mathbb{S}(N_l)$, $P$, $T$}
  \SetKwFunction{PO}{ParameterOptimization}
  \Begin{
    $\mathbb{S}\leftarrow\mathbb{S}(N_l)$;\\
    $t\leftarrow 0$;\\
    $c_l^\prime \leftarrow 1;$\\
    $i\leftarrow 1$;\\
    \For{$i\le N_{i1}$}{
      $\mathbb{U}_s\leftarrow \left\{\right\}$;\\
      \For{$\left|\mathbb{U}_s\right| < N_{s1}$}{
        $U\leftarrow \mathbf{USampling}(\mathbb{S},T,\epsilon_1,\epsilon_2)$;\\
        $\mathbb{U}_s\leftarrow \mathbb{U}_s \cup \left\{U\right\}$;\\
      }
      $\mathbb{U}_n(N_{r1})\leftarrow\mathbf{SolveProb\ref{prob:multi_obj}}(\mathbb{U}_s,P)$;\\
      \ForEach{$U\in\mathbb{U}_n(N_{r1})$}{
        $\bm{\theta}\leftarrow\mathbf{GetParameters}(U,P)$;\\
        $\alpha \leftarrow \mathbf{GetStepSize}(\alpha_0)$;\\
        $\bm{\theta}\leftarrow \bm{\theta} - \alpha \nabla C(U,\bm{\theta})$;\\
        $P\leftarrow \mathbf{UpdatePool}(\bm{\theta},P)$;\\
        $T \leftarrow$ Append $U$ on the candidate tree $T$;\\
      }
      \eIf{$c_l(\anode^0) = c_l^\prime$}{
        $t\leftarrow t+1$;\\
        \If{$t = N_{t1}$}{
          $\mathbf{break}$;\\
        }
      }{
        $t\leftarrow 0$;\\
        $c_l^\prime \leftarrow c_l(\anode^0)$;\\
      }
      $i\leftarrow i+1$;\\
    }
  }
  \Return{$P$, $T$.}
\end{algorithm}

\begin{algorithm}[ht]
  \caption{\label{al:pre_pool_training}PoolTrainingPrethermalization}
  \KwIn{$\mathbb{S}(N_l)$, $P$, $T$}
  \Begin{
    $\mathbb{S}\leftarrow\mathbb{S}(N_l)$;\\
    $i\leftarrow 1$\\
    $\epsilon_1^\prime = {(i-1)\epsilon_1}/{N_{i0}}$;\\
    \For{$i\le N_{i1}$}{
      $\mathbb{U}_s\leftarrow \left\{\right\}$;\\
      \For{$\left|\mathbb{U}_s\right| < N_{s1}$}{
        $U\leftarrow \mathbf{USampling}(\mathbb{S},T,\epsilon_1^\prime,\epsilon_2)$;\\
        $\mathbb{U}_s\leftarrow \mathbb{U}_s \cup \left\{U\right\}$;\\
        $j\leftarrow j+1$;\\
      }
      $\mathbb{U}_n(N_{r1})\leftarrow\mathbf{SolveProb\ref{prob:multi_obj}}(\mathbb{U}_s,P)$;\\
      \ForEach{$U\in\mathbb{U}_n(N_{r1})$}{
        $\bm{\theta}\leftarrow\mathbf{GetParameters}(U,P)$;\\
        $\alpha \leftarrow \mathbf{GetStepSize}(\alpha_0)$;\\
        $\bm{\theta}\leftarrow \bm{\theta} - \alpha \nabla C(U,\bm{\theta})$;\\
        $P\leftarrow \mathbf{UpdatePool}(\bm{\theta},P)$;\\
        $T \leftarrow$ Append $U$ on the candidate tree $T$;\\
      }
      $i\leftarrow i+1$;\\
    }
    $\mathbb{P}_0\leftarrow \mathbf{Expand}(P)$;\\
  }
  \Return{$P$, $T$.}
\end{algorithm}

\begin{algorithm}[ht]
  \caption{\label{al:pool_training}PoolTraining}
  \KwIn{$\mathbb{S}(N_l)$}
  \Begin{
    $P\leftarrow$ Construct parameter pool with $\bm{0}$ initialization;\\
    $T\leftarrow \{\anode^0\}$;\\
    $P,T \leftarrow \mathbf{PoolTrainingPrethermalization}(\mathbb{S}(N_l),P,T)$;\\
    $P,T \leftarrow \mathbf{PoolTrainingMainProcess}(\mathbb{S}(N_l),P,T)$;\\
    $\mathbb{P}_0\leftarrow\mathbf{Expand}(P)$;
  }
  \Return{$\mathbb{P}_0$, $T$.}
\end{algorithm}

\begin{algorithm}[ht]
  \caption{\label{al:alt_training}AlternateTraining}
  \KwIn{$\mathbb{S}(N_l)$, $\mathbb{P}_0$, $T$}
  \Begin{
    $\mathbb{P}\leftarrow\mathbb{P}_0$\\
    $\mathbb{U}_s\leftarrow \left\{\right\}$;\\
    $U_{best}\leftarrow \mathbf{None}$;\\
    \For{$\left|\mathbb{U}_s\right|\le N_{s2}$}{
      $U\leftarrow \mathbf{USampling}(\mathbb{S},T,\epsilon_1,\epsilon_2)$;\\
      $\mathbb{U}_s\leftarrow \mathbb{U}_s \cup \left\{U\right\}$;\\
    }
    $i\leftarrow 1$;\\
    \For{$i\le N_{i2}$}{
      $\mathbb{U}_n(N_{r2}),\mathbb{U}_n(N_{r}^\prime),\mathbb{U}_n(N_{r}^\prime + 1)\leftarrow\mathbf{SolveProb\ref{prob:multi_obj}}(\mathbb{U}_s,\mathbb{P})$;\\
      \ForEach{$U\in\mathbb{U}_n(N_{r2})$}{
        $j\leftarrow 1$;\\
        \For{$j\le N_{o}$}{
          $\alpha \leftarrow \mathbf{GetStepSize}(\alpha_0)$;\\
          $\bm{\theta}_U\leftarrow \bm{\theta}_U - \alpha \nabla C(U,\bm{\theta}_U)$;\\
          $j\leftarrow j+1$;\\
        }
      }
      $\mathbb{U}_{survival}\leftarrow\mathbf{GetSurvivals}(\mathbb{U}_n(N_r^\prime),\mathbb{U}_n(N_r^\prime+1))$;\\
      \If{$\exists U\in \mathbb{U}_{survival}$ such that $C(U,\bm{\theta}_U) < C(U_{best},\bm{\theta}_{U_{best}})$}{
        $U_{best}\leftarrow \mathbf{None}$;\\
      }
      $\mathbb{U}_{new}\leftarrow\{\}$;\\
      \ForEach{$U\in\mathbb{U}_{survival}$}{
        $U_{new}\leftarrow \mathbf{RandomGeneticOperator}(U)$;\\
        $\mathbb{U}_{new}\leftarrow\mathbb{U}_{new} \cup \{U_{new}\}$;\\
        $\alpha \leftarrow \mathbf{GetStepSize}(\alpha_0)$;\\
        \If{$\alpha\frac{\left\|\nabla C(U,\bm{\theta}_U)\right\|_2}{|\bm{\theta}_U|}<\xi$}{
          $\mathbb{U}_{survival} \leftarrow \mathbb{U}_{survival} \setminus \{U\}$;\\
          \If{$C(U,\bm{\theta}_U) < C(U_{best},\bm{\theta}_{U_{best}})$}{
            $U_{best}\leftarrow U$;\\
          }
        }
      }
      $U_s\leftarrow \mathbb{U}_{survival}\cap \mathbb{U}_{new}$;\\
      \For{$\left|\mathbb{U}_s\right|\le N_{s2}$}{
        $U\leftarrow \mathbf{USampling}(\mathbb{S},T,\epsilon_1,\epsilon_2)$;\\
        $\mathbb{U}_s\leftarrow \mathbb{U}_s \cup \left\{U\right\}$;\\
      }
      \If{$U_{best}$ has been preserved for $N_{t2}$ generations}{
          $\mathbf{break}$;\\
      }
    }
    $U^*\leftarrow \arg\min_{U\in\mathbb{U}_s\cup\{U_{best}\}}C(U,\bm{\theta}_U)$;\\
  }
  \Return{$U^*$, $\bm{\theta}_{U^*}$.}
\end{algorithm}

\begin{algorithm}[ht]
  \caption{\label{al:vqe_retraining}VQERetraining}
  \KwIn{$U$, $\bm{\theta}_0$}
  \Begin{
    $\bm{\theta}\leftarrow\bm{\theta}_0$;\\
    $i\leftarrow 1$\\
    \For{$i\le N_{i3}$}{
      $\alpha \leftarrow \mathbf{GetStepSize}(\alpha_0)$;\\
      $\bm{\theta}\leftarrow \bm{\theta} - \alpha \nabla C(U,\bm{\theta})$;\\
      $i\leftarrow i+1$;\\
      \If{$\alpha\frac{\left\|\nabla C(U,\bm{\theta})\right\|_2}{|\bm{\theta}|}<\xi$}{
      $\mathbf{break}$;\\
      }
    }
    $U^*\leftarrow U, \bm{\theta}^*\leftarrow\bm{\theta}$;
  }
  \Return{$U^*$, $\bm{\theta}^*$.}
\end{algorithm}

\end{document}